\DeclareMathOperator\inter{int}
\newtheorem{theorem}{Theorem}[section]
\newtheorem{proposition}[theorem]{Proposition}
\newtheorem{lemma}[theorem]{Lemma}
\newtheorem{corollary}[theorem]{Corollary}
\newtheorem{claim}[theorem]{Claim}
\theoremstyle{definition}
\newtheorem{definition}{Definition}
\newtheorem{remark}[theorem]{Remark}
\definecolor{backcolour}{rgb}{0.63, 0.79, 0.95}
\lstdefinestyle{mystyle}{
  backgroundcolor=\color{backcolour},
  basicstyle=\ttfamily\footnotesize,
  breakatwhitespace=false,         
  breaklines=true,                 
  captionpos=b,                    
  keepspaces=true,                 
  numbers=left,                    
  numbersep=5pt,                  
  showspaces=false,                
  showstringspaces=false,
  showtabs=false,                  
  tabsize=2
}
\providecommand{\keywords}[1]{\textbf{\textit{Keywords:}} #1}
\providecommand{\jel}[1]{\textbf{\textit{JEL Classifications:}} #1}
\begin{document}
\author{Marilyn Pease \and Mark Whitmeyer}
\thanks{MP: Kelley School of Business, Indiana University, \href{marpease@iu.edu}{marpease@iu.edu} \& MW: Arizona State University, \href{mailto:mark.whitmeyer@gmail.com}{mark.whitmeyer@gmail.com}. We thank Arjada Bardhi, Henrique Castro-Pires, Chris Chambers, Duarte Gon\c{c}alves, Faruk Gul, Ilia Krasikov, Pietro Ortoleva, and Sungmin Park for their comments. We are especially grateful to Xiao Lin, who contributed significantly to the material in \S\ref{lotteries}; and to Jian Li, who was an exceptionally helpful discussant. Seminar and conference audiences at ASU, Edinburgh, Georgetown, UGA, OSU, UCL, UWO, the 2024 SEA meetings, ESWC 2025, and the 2025 Women in Economic Theory conference provided useful feedback. Konstantin von Beringe and Titus Gao provided excellent research assistance. This paper was previously circulated under the title "How to Make an Action Better." Draft date: \today.}
\title{How to Make an Action Attractive}

%\begin{abstract} For two actions in a decision problem, \(a\) and \(b\), each of which produces a state-dependent monetary reward, we study how to robustly make action \(a\) more attractive. Action \(\hat{a}\) improves upon \(a\) in this manner if the set of beliefs at which \(a\) is preferred to \(b\) is a subset of the set of beliefs at which \(\hat{a}\) is preferred to \(b\), \textit{irrespective of the risk-averse agent's utility function (in money).} We provide a full characterization of this relation and discuss applications in politics, bilateral trade, insurance, and information acquisition. \end{abstract}A policymaker often wants to steer a decision-maker toward one of two actions, but lacks reliable knowledge of how the decision-maker perceives uncertainty or evaluates risk. We formalize a notion of robust paternalism: a modification \(\hat{a}\) of desired action \(a\) is more attractive than or superior to \(a\) if, for every belief over states and every increasing concave utility, whenever the decision-maker prefers \(a\) to \(b\), she prefers \(\hat{a}\) to \(b\). We characterize all such modifications directly in terms of state-dependent payoffs and discuss applications to political competition, bilateral trade, insurance, and information acquisition.
\begin{abstract}
A policymaker often wants to steer a decision-maker toward one of two actions, but lacks reliable knowledge of how the decision-maker perceives uncertainty or evaluates risk. We formalize a notion of robust paternalism: a modification \(\hat{a}\) of a desired action \(a\) is robustly more attractive than \(a\) relative to \(b\) if, for every belief over states and every increasing concave utility function, whenever the decision-maker prefers \(a\) to \(b\), she also prefers \(\hat{a}\) to \(b\). We characterize all such modifications directly in terms of state-dependent payoffs and discuss applications to political competition, bilateral trade, insurance, and information acquisition.
\end{abstract}

\maketitle
\keywords{Paternalism; Decision-making Under Uncertainty; Risk Aversion; Robust Predictions; Comparative Statics}\\
\jel{D00; D80}

\section{Introduction}

A central goal of paternalistic policy is to make a desired behavior more appealing than a tempting alternative: think of vaccination versus non-vaccination, retirement saving versus immediate consumption, safe versus risky driving, or smoking versus non-smoking. In practice, however, the policymaker rarely knows how the population of individuals perceives the underlying uncertainty or how sensitive they are to risk. A policy that looks like an ``obvious improvement'' under one belief or one utility can backfire under another.

This paper develops a notion of \emph{robust paternalism} for a simple but widely applicable environment. There are two actions facing an agent in a decision problem, \(a\) and \(b\), each delivering a state-dependent monetary payoff. A policymaker can modify the recommended action \(a\) into \(\hat a\) (via a subsidy, a transfer scheme, a contractual redesign, or any intervention that changes the payoff profile across states), but does not observe the decision-maker's belief over states and does not know her von Neumann-Morgenstern utility beyond monotonicity and (in our main result) concavity. We ask: what properties of \(\hat a\) guarantee that \(\hat a\) is \emph{never less attractive than \(a\) versus \(b\)}?  %, uniformly over beliefs and risk aversion?

The same basic question appears in strategic and institutional settings in which a designer must choose what to offer a heterogeneous population, such as product design and electoral competition. A firm or political party may have the goal of replacing \(a\) with \(\hat a\) and not shrinking the set of people who prefer it to \(b\), regardless of their beliefs or risk preferences. Such a goal is natural when the designer has little reliable knowledge of this heterogeneity. 

Consider, for instance, a consumer deciding between two firms' products. Firm 2 produces product \(b\), while firm 1 can choose to produce either product \(a\) or product \(\hat a\). If the consumer prefers \(a\) to \(b\), what are the properties of \(\hat a\) that guarantee that the consumer also prefers it to \(b\)? In the political venue, suppose party \(A\) must choose between candidates \(a\) and \(\hat a\) to run against party \(B\)'s candidate \(b\). When would party \(A\) benefit more from running candidate \(\hat a\) over candidate \(a\), given that \(b\) is the opponent? Allowing for a population with wildly heterogeneous beliefs and utility functions, what are the characteristics of \(\hat a\) that would ensure at least as many people choose it over \(b\) as they would \(a\)?

The knee-jerk response to these questions is the following guess: when the decision-maker's (DM's) utility is known only to be within the class of increasing-in-money utility functions, it must be that \(\hat{a}\) corresponds to a first-order stochastic dominance improvement over \(a\). Likewise, when the DM is known also to be risk averse, it must be that \(\hat{a}\) corresponds to a second-order stochastic dominance improvement over \(a\). However, these answers do not stand up to scrutiny--in particular, we do not know the lotteries produced by the DM's subjective belief, so we cannot speak directly of dominance of lotteries.\footnote{In fact, as we reveal in \S\ref{lotteries}, even if we know the DM's belief, a dominance improvement of the lottery induced by \(a\) is sufficient, but not necessary--a weaker condition suffices.} State-wise dominance of \(a\) or \(b\) is also a promising answer, but it turns out that this is too strong when the DM is known to be risk averse.

In our two main results, Theorems \ref{bigone} and \ref{theorem38}, we fully characterize these relations in terms of the primitives of the environment--the state-dependent payoffs of actions \(a\), \(\hat{a}\), and \(b\). Theorem \ref{bigone} concerns the case of a risk-averse DM. We show that the DM preferring \(a\) to \(b\) must imply that she also prefers \(\hat{a}\) to \(b\) if and only if \(\hat{a}\) dominates an action whose payoffs are a convex combination of the payoffs to \(a\) and \(b\). Intuitively, if $a$ is preferred to $b$, then mixing some of $a$ with $b$ is better than $b$ as well. Then, making $\hat{a}$ dominate such a mixture can only further increase its attractiveness versus $b$. 

If we do not assume the DM to be risk averse, our other main result (Theorem \ref{theorem38}) states that the necessary and sufficient conditions become much more restrictive: \(\hat{a}\) must dominate \(a\) or \(b\) (in a state-wise sense). If $u$ is concave, we know from Theorem \ref{bigone} that significant structure on payoff transformations is required. Convexity also imposes structure, but in the opposite manner. Combining these requirements leaves only dominance as necessary (and it is obviously sufficient).

%With these results in hand, we discuss several examples. Up first is a political example: in a two-party election, suppose a party is contemplating whether to replace its nominee, candidate \(a\). Given that the opponent is candidate \(b\), what are the properties of a candidate that is robustly more appealing versus \(b\) than \(a\)? Our first theorem applies directly, so that she must dominate a convex combination of \(a\) and \(b\) in the sense of our first theorem. Our other two examples posit a status quo trading arrangement--a trade of a risky asset, or a sale of an insurance contract--and asks what contracts both parties will be willing to participate in given that the status quo contract is acceptable to both. Again, our main result tells us that they must be convex combinations of sorts (now, with respect to the outside options).

With these results in hand, we discuss several examples. We begin in \S\ref{subsec:smoking} with the robust-paternalism motivation: a policymaker designs budget-balanced transfers to increase smoking cessation. We then turn to elections and contracting--bilateral trade of a risky asset and design of an insurance policy. In each, the economic message is that a robust improvement \(\hat a\) must ``move \(a\) toward \(b\),'' becoming more (less) risky than \(a\) precisely when \(b\) is more (less) risky. Somewhat counterintuitively, this means that when \(b\) is the riskier alternative, a robustly more attractive \(\hat a\) cannot be obtained by simply ``insuring'' \(a\) or smoothing its payoffs: the way to expand appeal across heterogeneous beliefs and risk attitudes is to also tilt \(a\) in the direction of \(b\)'s risk; \textit{viz.}, make \(a\) riskier. \textit{To robustly make action \(a\) appeal more to the risk averse, the designer makes it riskier.}

We finish the paper by studying four natural variants of our main question. Theorem \ref{bigone} assumes two dimensions of uncertainty about the protagonist; both her belief and utility function. In \S\ref{lotteries}, we shut one of these down, positing that the DM's belief is known, then characterize, in Theorem \ref{theoremlottery}, ``how to make a lottery better.'' \S\ref{many} restores uncertainty about beliefs; there, we ask for robust improvements to an action vis-a-vis not just one but multiple alternatives. In Proposition \ref{manyactions}, we show that in a broad class of decision problems, a straightforward aggregation of Theorem \ref{bigone}'s conditions is necessary and sufficient for one action to improve against many. 

Our final section endogenizes the DM's belief. Namely, \S\ref{infoac} explores information acquisition, in which our question becomes: how must \(a\) relate to \(\hat{a}\) in a way that guarantees that \(\hat{a}\) is chosen more frequently in the DM's problem of binary-choice with flexible information acquisition? With two states, we show that \(\hat{a}\) dominating \(a\) or \(b\) is necessary and sufficient. Notably, we illustrate that with three or more states, dominating \(a\) is no longer sufficient. That is, \(\hat{a}\) can be made better than \(a\) in a strong sense (state-wise dominance), \textit{yet be chosen less}.

\subsection{Related Work}

The body of work studying decision-making under uncertainty is sizeable. The work closest to this one is \cite{safety}. There, we formulate a binary relation between actions: action \(a\) is ``safer" than \(b\) if the set of beliefs at which \(a\) is preferred to \(b\) grows larger, in a set inclusion sense, when we make the DM more risk averse. Both \cite{markcs} and \cite{markscs} built off of this paper: the former revisits \cite{rabin2000risk}'s calibration exercise in a subjective world, and the latter studies robust comparative statics.

\cite{ROTHSCHILD1970225} is a seminal work that characterizes transformations of lotteries that are preferred by all risk-averse agents. \cite{aumann2008economic} formulate a ``measure of riskiness'' of gambles, as do \cite{foster2009operational} (who are subsequently followed up by \cite{bali2011generalized} and \cite{riedel2015foster}). Crucially, these indices and measures correspond to inherently stochastic objects--the lotteries at hand. Our conception of an improvement to an action concerns comparisons of state-dependent payoffs, which are themselves non-random objects (they are just real numbers).

Naturally this work is also connected to the broader literature studying actions that are comparatively friendly toward risk. In addition to the aforementioned paper of ours, \cite{safety}, which, like this one, centers around a decision-maker's state-dependent payoffs to actions, this literature includes \cite{hammond1974simplifying}, \cite{lambert1979attitudes}, \cite{karlin1963generalized}, \cite{jewitt1987risk}, and \cite{jewitt1989choosing}.
Notably, they are statements about lotteries, \textit{viz.}, random objects.

Our research question can be reformulated as a revealed-preference exercise. This equivalence is central to deriving our result in the known-beliefs portion of the paper (\S\ref{lotteries}), where we use a remarkable result of \cite{fishburn1975separation} ``off the shelf.'' Existing results are not as useful in our main environment--with uncertain beliefs \textit{and} utilities. In a similar environment to our known-beliefs setting, \cite{chompchomp} asks for the properties of lotteries \(L_1\), \(L_2\), \(\hat{L}_1\) and \(\hat{L}_2\), equivalent to a DM preferring \(\hat{L}_1\) to \(\hat{L}_2\) whenever she prefers \(L_1\) to \(L_2\). To elaborate, \cite{chompchomp} looks at two arbitrary binary menus of lotteries, whereas in \S\ref{lotteries}, we look at two finite (not-necessarily-binary) menus of lotteries in which all but one of the lotteries are preserved across menus. Our main environment, in which we require robustness over beliefs as well, has no parallel in \cite{chompchomp}.

A number of other papers also bear mention. \cite{gilboa2022were} ask when beliefs can justify a collection of prize and certainty-equivalent pairs.\footnote{Their main result is similar to the result we use from \cite{fishburn1975separation}. This is due to the fact that both papers' questions boil down to whether particular separating hyperplanes exist, which is a common theme running through the revealed-preference literature.} \cite{richter1978revelations} scrutinize what a collection of binary choices between actions reveals about an agent's beliefs. \cite{echenique2015savage} characterize the market behavior of a risk-averse subjective expected utility maximizing agent, showing that consistent behavior is equivalent to satisfying the ``strong axiom of revealed subjective utility.'' 

This paper also inhabits the comparative statics realm--we change an aspect of a decision problem and see how it affects a decision-maker's choice. Our robustness criterion as well as the simplicity of our setting distinguishes our work from the standard pieces, e.g., \cite{milgrom1994monotone}, \cite{edlin1998strict}, and \cite{athey2002monotone}. The works involving aggregation (\cite{quah2012aggregating}, \cite{choi2017ordinal}, and \cite{kartik2023single}) are closer still--as this inherently corresponds to distributional robustness--but none leave as free parameters both the distribution over states and the DM's utility function, as we do. Special mention is due to \cite{curello2019preference}, who conduct a robust comparative statics exercise in which an analyst, with limited knowledge of an agent's preferences, predicts the agent's choice across menus.

In our final section, we explore properties of the new action, \(\hat{a}\), that make it more likely to be selected than \(a\) when the alternative is \(b\) if information is endogenously acquired. This property is similar in spirit to the observation of \cite{matvejka2015rational} that new actions added to a menu may ``activate'' previously unchosen actions. \cite{muller2021rational} provide a full characterization of this phenomenon. One crucial distinction between our analysis and theirs is that they explore \textit{additions} whereas our modification is a \textit{replacement}.

\section{Model}

There is a topological space of states, \(\Theta\), which is endowed with the Borel \(\sigma\)-algebra, and which we assume to be compact and metrizable. \(\theta\) denotes a generic element of \(\Theta\). We denote the set of all Borel probability measures on \(\Theta\) by \(\Delta \equiv \Delta \left(\Theta\right)\). There is also a decision-maker (DM), who is endowed with two actions, \(a\) and \(b\). \(A = \left\{a,b\right\}\) denotes the set of actions, and each action \(\tilde{a} \in A\) is a bounded Borel-measurable function from the state space to the set of outcomes, \(\tilde{a} \colon \Theta \to \mathbb{R}\). For convenience, for any \(\tilde{a} \in A\), we write \(\tilde{a}_\theta \equiv \tilde{a}\left(\theta\right)\). Given a probability distribution over states \(\mu \in \Delta\), an action is a (simple) lottery. 

We further specify that no action \(\tilde{a} \in A\) is weakly dominated by the other, and partition \(\Theta\) into three sets:
\[\mathcal{A} \coloneqq \left\{\theta \in \Theta \colon a_{\theta} > b_{\theta}\right\}, \quad \mathcal{B} \coloneqq \left\{\theta \in \Theta \colon a_{\theta} < b_{\theta}\right\}, \quad \text{and} \quad \mathcal{C} \coloneqq \left\{\theta \in \Theta \colon a_{\theta} = b_{\theta}\right\}\text{.}\]
The DM is an expected-utility maximizer, with a von Neumann-Morgenstern utility function defined on the outcome space \(u \colon \mathbb{R} \to \mathbb{R}\). We assume that the DM's utility lies in some convex class of continuous strictly increasing utility functions \(\mathcal{U}\). In our results, we specialize to two particular such classes of interest; \(\mathcal{U}_{\uparrow}\), in which no further restrictions are placed on the utilities; and \(\mathcal{U}_{RA}\), in which utilities are also assumed to be weakly concave. Our main focus is the latter case.

%We posit that \(u\) is strictly increasing and weakly concave, denoting this class of functions by \(\mathcal{U}_{RA}\). On occasion, we will drop the assumption that \(u\) is weakly concave, merely requiring that it be strictly increasing (and continuous), denoting this class of functions by \(\mathcal{U}\).

Given \(A\), we are interested in how a modification of \(a\) affects the DM's choice of action. That is, we will modify \(a\) to some new \(\hat{a}\), in which case the new menu is \(\left(\hat{a},b\right)\), and examine when \(\hat{a}\) must be (strictly) preferred to \(b\) whenever \(a\) is. Formally,
\begin{definition}\label{definition1}
 Action \(\hat{a}\) is \textit{\(b\)-superior} to action \(a\) over \(\Delta\) and \(\mathcal{U}\) if for all \(\mu \in \Delta\), for all \(u \in \mathcal{U}\),
 \[\mathbb{E}_{\mu}u\left(a_{\theta}\right) \underset{(>)}{\geq} \mathbb{E}_{\mu}u\left(b_{\theta}\right) \quad \Rightarrow \quad \mathbb{E}_{\mu} u\left(\hat{a}_{\theta}\right) \underset{(>)}{\geq}  \mathbb{E}_{\mu} u\left(b_{\theta}\right)\text{.}\]\end{definition}

\section{Robust Improvements}\label{s:RobustImprovements} 

We now present our main result, characterizing the conditions under which $\hat{a}$ is \(b\)-superior to $a$ when \(\mathcal{U} = \mathcal{U}_{RA}\). To do so, we first define a \textit{mixture} of actions \(a\) and \(b\) to be the action \(a^\lambda\) that yields payoff \[a^{\lambda}_\theta \coloneqq \lambda a_{\theta} + \left(1-\lambda\right) b_\theta\]
in each state \(\theta \in \Theta\) for some \(\lambda \in \left[0,1\right]\).

Our main result is, understanding dominance in a state-wise sense,\footnote{Specifically, \(\tilde{a}\) dominates \(a\) if \(\tilde{a}_\theta \geq a_\theta\) for all \(\theta \in \Theta\), permitting equality in all states.}
\begin{theorem}\label{bigone}
    Fix \(a\) and \(b\). Action \(\hat{a}\) is \(b\)-superior to action \(a\) over \(\Delta\) and \(\mathcal{U}_{RA}\) if and only if \(\hat{a}_\theta > b_\theta\) for all \(\theta \in \mathcal{A}\) and \(\hat{a}\) (weakly) dominates a mixture of \(a\) and \(b\). 
\end{theorem}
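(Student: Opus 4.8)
The plan is to prove the two directions separately, treating the statement as a biconditional between $b$-superiority and the conjunction of two conditions: (i) $\hat{a}_\theta > b_\theta$ for all $\theta \in \mathcal{A}$, and (ii) $\hat{a}$ weakly dominates some mixture $a^\lambda$. For the \emph{sufficiency} direction, suppose $\hat{a}$ dominates $a^\lambda$ for some $\lambda \in [0,1]$ and condition (i) holds. Fix any concave, strictly increasing, continuous $u$ and any belief $\mu$ with $\mathbb{E}_\mu u(a_\theta) \geq \mathbb{E}_\mu u(b_\theta)$. The key observation is that $a^\lambda_\theta = \lambda a_\theta + (1-\lambda) b_\theta$ is a state-wise mixture, so by concavity of $u$ we have $u(a^\lambda_\theta) \geq \lambda u(a_\theta) + (1-\lambda) u(b_\theta)$ pointwise; taking expectations and using $\mathbb{E}_\mu u(a_\theta) \geq \mathbb{E}_\mu u(b_\theta)$ gives $\mathbb{E}_\mu u(a^\lambda_\theta) \geq \mathbb{E}_\mu u(b_\theta)$. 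Since $\hat{a}$ state-wise dominates $a^\lambda$ and $u$ is increasing, $\mathbb{E}_\mu u(\hat{a}_\theta) \geq \mathbb{E}_\mu u(a^\lambda_\theta) \geq \mathbb{E}_\mu u(b_\theta)$, yielding the weak implication. For the strict implication I would use condition (i): if $\mathbb{E}_\mu u(a_\theta) > \mathbb{E}_\mu u(b_\theta)$, then $\mu(\mathcal{A}) > 0$, and on $\mathcal{A}$ we have $\hat{a}_\theta > b_\theta$ strictly, which should let me upgrade the chain to a strict inequality.

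For the \emph{necessity} direction, I would argue contrapositively: if either condition fails, exhibit a concave $u$ and belief $\mu$ witnessing $\mathbb{E}_\mu u(a_\theta) \geq \mathbb{E}_\mu u(b_\theta)$ but $\mathbb{E}_\mu u(\hat{a}_\theta) < \mathbb{E}_\mu u(b_\theta)$ (or the corresponding strict-version failure). Violating (i) is the easier case: if $\hat{a}_{\theta_0} \leq b_{\theta_0}$ at some $\theta_0 \in \mathcal{A}$, I would concentrate belief near $\theta_0$, where $a$ strictly beats $b$ but $\hat{a}$ does not, contradicting the strict implication in the limit. The substantive case is violating (ii): I must show that if $\hat{a}$ dominates \emph{no} mixture $a^\lambda$, then $b$-superiority fails. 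The natural tool is a separating-hyperplane argument. I would consider the set of payoff-difference functions and frame ``$\hat{a}$ dominates some $a^\lambda$'' as membership of a point in a convex set; its failure produces a separating functional, which I would then interpret as the pair $(\mu, u)$ (or, via a supporting concave utility, as a marginal-utility measure) that breaks the implication.

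The main obstacle I anticipate is the necessity half of the separation argument, because the object being separated lives in the infinite-dimensional space of state-dependent payoffs and the ``test'' functionals must be realizable as an expected concave utility against some belief---not every linear functional corresponds to a legitimate $(\mu, u)$ pair. Concretely, the condition ``$\hat{a}$ dominates a mixture of $a$ and $b$'' is equivalent to asking whether the function $\theta \mapsto \hat{a}_\theta - b_\theta$ lies above $\lambda(a_\theta - b_\theta)$ for some single scalar $\lambda \in [0,1]$ uniformly in $\theta$; failure of this means that for every $\lambda$ there is a state where $\hat{a}$ falls strictly below $a^\lambda$. I expect the crux is to convert this ``for every $\lambda$, some bad state'' statement into a single belief-utility pair, likely by choosing $\mu$ supported on the relevant bad states and building $u$ (piecewise-linear, hence concave with controllable marginal utilities) so that its expected-utility comparison picks out exactly the violated inequality. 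I would lean on the fact that piecewise-linear concave utilities generate, via their derivatives, a rich enough family of positive weightings to realize the separating functional, converting the geometric separation into an economically meaningful risk-averse counterexample.
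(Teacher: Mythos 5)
Your sufficiency direction is correct and is essentially the paper's argument: your pointwise concavity inequality \(u(a^\lambda_\theta)\ge\lambda u(a_\theta)+(1-\lambda)u(b_\theta)\) is the same step the paper packages as ``the mixture lottery is a mean-preserving spread of \(L_{\hat a}\),'' and your use of \(\mu(\mathcal{A})>0\) to upgrade to the strict implication is sound. The gap is in necessity, and it sits exactly where you flagged it.

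The separation you propose produces the wrong object. Framing ``\(\hat a\) dominates some mixture'' as the segment \(\left\{\hat a-b-\lambda(a-b)\colon\lambda\in[0,1]\right\}\) meeting the cone of nonnegative functions, its failure yields, by separation, a nonnegative functional --- after normalization, a belief \(\mu\) with \(\mathbb{E}_\mu\left[\hat a_\theta-b_\theta\right]<\lambda\,\mathbb{E}_\mu\left[a_\theta-b_\theta\right]\) for all \(\lambda\in[0,1]\). That is a belief at which a risk-neutral DM prefers \(b\) to \(\hat a\); nothing forces \(\mathbb{E}_\mu a_\theta\ge\mathbb{E}_\mu b_\theta\) at the same \(\mu\), so it need not violate \(b\)-superiority at all. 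Worse, no choice of hyperplane in payoff space can close this hole, because mixture dominance can fail even though \emph{every} risk-neutral belief that ranks \(a\succeq b\) also ranks \(\hat a\succeq b\): take, e.g., \(\hat a_\theta-b_\theta=2\left(a_\theta-b_\theta\right)\) for all \(\theta\), which gives the configuration \(b_\theta<a_\theta<\hat a_\theta\) on \(\mathcal{A}\) and \(\hat a_{\theta'}<a_{\theta'}<b_{\theta'}\) on \(\mathcal{B}\). In such cases the counterexample witness must be a genuinely concave \(u\) chosen \emph{jointly} with \(\mu\) --- a nonlinear functional of the payoffs --- which linear separation cannot deliver. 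The realizability issue you flagged is therefore not a technicality to be handled later; it is fatal to the argument as sketched.

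What the paper does instead, and what your plan is missing, is a two-step reduction. First (Lemma \ref{lemma:Nec1}), the problem is localized to two-point beliefs supported on pairs \(\left(\theta,\theta'\right)\in\mathcal{A}\times\mathcal{B}\): for each such pair, either the risk-neutral comparison already produces a counterexample, or the configuration above obtains, and then an explicitly constructed kinked utility --- linear up to \(a_{\theta'}\), slope \(\iota\to0\) above --- reverses the ranking while \emph{keeping \(a\) above \(b\)}. This two-sided requirement (hold \(a\succeq b\) while pushing \(\hat a\prec b\)) is precisely what your separating functional cannot enforce. Second (Lemma \ref{pairwiselemma}), the pairwise weights \(\lambda_{\theta,\theta'}\) are glued into a single \(\lambda\) by an inf/sup comparison across \(\mathcal{A}\) and \(\mathcal{B}\). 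Your piecewise-linear-utility instinct is the right one for the hard case, but without the localization to pairs and the explicit two-sided construction, your necessity half remains a plan rather than a proof.
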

\begin{proof}
    \(\left(\Leftarrow\right)\) Suppose \(\hat{a}\) is a mixture of \(a\) and \(b\) (as subsequent dominance only makes \(\hat{a}\) more enticing). As \(\hat{a}_\theta > b_\theta\) for all \(\theta \in \mathcal{A}\), \(\hat{a} \neq b\). Let \(L_a\) be the lottery that pays out \(a_\theta\) with probability \(\mu\left(\theta\right)\); \(L_b\), the lottery that pays out \(b_\theta\) with probability \(\mu\left(\theta\right)\); and \(L_{\hat{a}}\), the lottery that pays out \(\hat{a}_\theta\) with probability \(\mu\left(\theta\right)\). Also suppose \(\mu \in \Delta\) is such that the DM prefers \(a\) to \(b\), i.e., \(L_a \succeq L_b\) (the proof for the case where we posit \(L_a \succ L_b\) follows analogously, so we omit it). 
    
    By independence (which is implied by expected utility), for any \(\lambda \in \left[0,1\right]\),
    \[L^\lambda \coloneqq \lambda L_a + \left(1-\lambda\right) L_b \succeq L_b\text{.}\]
    Moreover, by assumption there is some \(\lambda^* \in \left[0,1\right]\) for which \(\lambda^* a_\theta + \left(1-\lambda^*\right) b_\theta = \hat{a}_\theta\) for all \(\theta \in \Theta\). This implies that \(L^{\lambda^*}\) is a mean-preserving spread of \(L_{\hat{a}}\). Consequently, as the DM is risk averse, \(L_{\hat{a}} \succeq L^{\lambda^*} \succeq L_b\).

    \(\left(\Rightarrow\right)\) See Appendix \ref{Thm_Main}.
    \end{proof}

    The sufficiency proof can be summarized as follows. Let $\hat{a}$ be a mixture of $a$ and $b$, and consider a belief $\mu$ such that $a$ is preferred to $b$. Then, if $a$ is preferred to $b$ at $\mu$, any average of the lotteries corresponding to \(a\) and \(b\) at belief \(\mu\) must also be preferred to \(b\), as expected-utility DMs subscribe to the independence axiom. Moreover, as the DM is risk averse, any such randomization over lotteries is inferior to the lottery obtained by ``removing the extra randomness'' by giving the DM the state-dependent expected value instead. This is precisely a mixture of \(a\) and \(b\) as we define it.

    %mixing $a$ and $b$ will still be preferred to $b$ at $\mu$. Additionally, mixing $a$ and $b$ results in the payoffs of $\hat{a}$ and $b$ being more similar than the payoffs of $a$ and $b$. This is attractive to a risk-averse DM.

    The necessity proof of Theorem \ref{bigone} is a little more involved, and has been relegated to the appendix. It can be summarized as follows. First, if $\hat{a}$ is $b$-superior, then it must be that $\hat{a}_\theta > b_\theta$ for all $\theta \in \mathcal{A}$: in any state where the DM strictly prefers $a$, she must also strictly prefer $\hat{a}$. Second, in order to show that $b$-Superiority implies that $\hat{a}$ dominates a mixture of $a$ and $b$, we tackle the contrapositive. If such a dominance did not hold, we show that it must not hold for a two-state (\(\left\{\theta,\theta'\right\}\)) restriction of the problem, where \(a\) is strictly preferred to \(b\) in state \(\theta\) and \(b\) to \(a\) in \(\theta'\). In other words, if \(\hat{a}\) is \(b\)-superior, but \(\hat{a}\) does not dominate a mix of \(a\) and \(b\), then it does not dominate a mix of \(a\) and \(b\) only over states $\theta$ and $\theta^\prime$. From there, we argue that it must be that $\hat{a}_{\theta^\prime} < a_{\theta^\prime} < b_{\theta^\prime}$ and $\hat{a}_{\theta} > a_{\theta} > b_{\theta}$, which allows us to construct a concave, single-kinked (``hockey stick'') utility function that either makes it so the DM prefers \(a\) to \(b\) for almost all beliefs but \(\hat{a}\) to \(b\) only for some,  or \(b\) to \(\hat{a}\) for almost all beliefs but \(b\) to \(a\) only for some. Either way, this violates \(b\)-superiority.

    We note that Theorem \ref{bigone} can easily be extended to the case in which a lower-bound (in terms of risk-aversion) for the DM is known. That is, suppose there is some (weakly) risk-averse \(\bar{u}\) of which the DM's utility is known to be some strictly increasing (weakly) concave transformation: \(u = \phi \circ \bar{u}\). Then, we just redefine \(a\) from map \(\theta \mapsto a(\theta)\) to \(\theta \mapsto \bar{u}(a(\theta))\) and likewise \(b\), before applying the theorem. Similarly, as second-order stochastic dominance is preserved under independent noise,\footnote{See, for instance, the discussion on page 3 of \cite{pomatto2020stochastic}.} the theorem also holds when we ask for robustness to aggregate risk.
\subsection{Intuition}\label{twostates}
To gain intuition for Theorem \ref{bigone}, consider the two-state environment, which is easy to visualize. The state is either $0$ or $1$, and we specify without loss of generality that \(a_{0} > b_{0}\) and \(b_{1} > a_{1}\). \(\mu \in \left[0,1\right]\) denotes the DM's belief that the state is \(1\).

There is  a cutoff belief at which the DM is indifferent between actions. Let $\bar{\mu}_u$ be the belief at which the DM is indifferent between actions $a$ and $b$ with utility $u$ and $\hat{\mu}_u$ be the belief at which the DM is indifferent between $\hat{a}$ and $b$ with the same utility. To see this graphically, let 
\[\ell_a(\mu) = a_0 (1-\mu) + a_1 \mu\text{,}\]
be the expected payoff (in money) of action \(a\) as a function of the belief \(\mu \in \left[0,1\right]\). In turn, lines
\[\ell_{\hat{a}}(\mu) = \hat{a}_0 (1-\mu) + \hat{a}_1 \mu, \quad \text{and} \quad \ell_b(\mu) = b_0 (1-\mu) + b_1 \mu\]
are the expected payoffs to actions \(\hat{a}\) and \(b\), respectively. In Figure \ref{fig1}, these are the blue (solid), red (dashed), and green (dotted) lines. The top panel of the figure shows the monetary payoffs, or the risk-neutral utility function, while the bottom panel depicts utilities with \(u\left(x\right) = \sqrt{1+x} - 3\). Then, $\hat{a}$ being $b$-superior to $a$ is equivalent to $\bar{\mu}_u  \leq \hat{\mu}_u$ for all concave $u$.

\begin{figure}
    \centering
    \includegraphics[width=.6\linewidth]{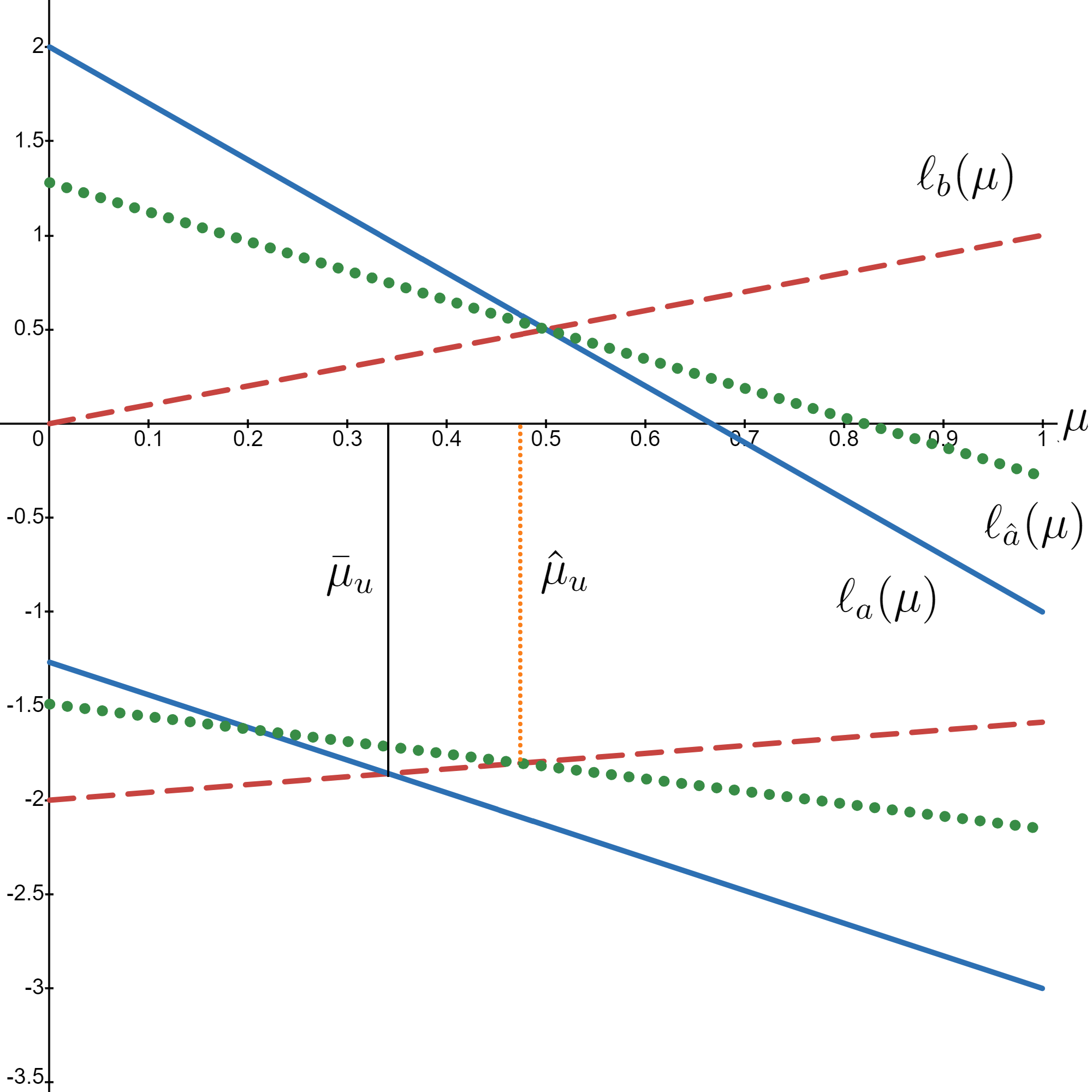}
    \caption{Improving \(a\). \href{https://www.desmos.com/calculator/rpxu7axgdb}{Try it yourself (rotate by moving the slider)!}}
    \label{fig1}
\end{figure}

Now we can think about $b$-superiority in the context of Theorem \ref{bigone}. It tells us that $\hat{a}$ must at least weakly dominate a mixture of $a$ and $b$. For now, just consider an $\hat{a}$ that is a mixture of $a$ and $b$. The corresponding line, $\ell_{\hat{a}}$, is a counterclockwise rotation of $\ell_a$. We know from the theorem that for any belief at which the DM prefers $a$ to $b$ ($\mu \leq \bar{\mu}_u$), she prefers $\hat{a}$ to $b$ as well. In other words, if $a$ was better than $b$ at $\mu$, then mixing ``some'' $a$ with $b$ is better than $b$ at $\mu$ as well. Consequently, the set of beliefs at which $\hat{a}$ is preferred must be \emph{at least} as large as the set where $a$ is preferred. Naturally, making $\hat{a}$ better than a mixture of $a$ and $b$ only strengthens its attractiveness versus \(b\).

Formally, we can write the following result. 
\begin{corollary}\label{corr1}
    \(\hat{a}\) is \(b\)-superior to \(a\) over \(\left[0,1\right]\) and \(\mathcal{U}_{RA}\) if and only if \(\hat{a}_0 > b_0\) and there exists a \(\lambda \in \left[0,1\right]\) such that \(\ell_{\hat{a}}\) lies weakly above the line \(\lambda \ell_a + \left(1-\lambda\right)\ell_b\).
\end{corollary}

\subsection{Beyond Expected Utility} 
%\subsection{The Toothlessness of Expected Utility}

Before going on, we wish to highlight one surprising feature of our theorem; namely, the fact that the sufficiency direction--and, therefore, the entire Theorem \ref{bigone}--is valid for far more general preferences than expected utility. We understand an action as a vector in \(\mathbb{R}^{\Theta}\) and suppose that a DM's preferences are represented by the monotone and concave \(V \colon \mathbb{R}^{\Theta} \to \mathbb{R}\), where she (strictly) prefers \(a\) to \(b\) if and only if \(V(a) \geq \ (>) \ V(b)\).

Suppose that \(\hat{a}\) dominates a mixture of \(a\) and \(b\), \(\hat{a} \neq b\), and the DM (strictly) prefers \(a\) to \(b\). Then, for some \(\lambda \in \left(0,1\right]\), by concavity and monotonicity,
\[V\left(\hat{a}\right) \geq V\left(\lambda a + (1-\lambda) b\right) \geq \lambda V(a) + (1-\lambda) V(b) \underset{(>)}{\geq} V(b) \text{.}\]
In fact, if we dropped the requirement that strict preferences are preserved, we could weaken the requirement on \(V\) to be merely quasi-concave.

\subsection{A Broader Class of Utilities}\label{ss:BroadUtil}
In the work up to this point, we have assumed mild but meaningful structure on the DM's utility function. More is better (\(u\) is strictly increasing in the reward), and the DM is risk-averse (\(u\) is, at least weakly, concave). Suppose we remove the assumption of risk aversion. We still assume that \(u\) is strictly increasing but now allow it to take arbitrary shape. Recall that \(\mathcal{U}_{\uparrow}\) denotes this class of utilities.

As we now document, without the assumption of risk aversion, the only transformations that must make \(b\) less attractive are unambiguous dominance improvements from \(a\) to \(\hat{a}\). This is intuitive; think back to Corollary \ref{corr1}. There, the necessary relationship between \(\hat{a}\) and \(a\) is \(\ell_{\hat{a}}\)'s dominance of a counterclockwise rotation of \(\ell_a\). However, the opposite modification is needed when \(u\) is convex--which is now permitted--that is, \(\ell_{\hat{a}}\) needs to dominate a \textit{clockwise} rotation of \(\ell_a\). This leaves only dominance improvements.

\begin{theorem}\label{theorem38}
    Action \(\hat{a}\) is \(b\)-superior to action \(a\) over \(\Delta\) and \(\mathcal{U}_{\uparrow}\) if and only if for all \(\theta \in \mathcal{A}\) \(\hat{a}_\theta > b_\theta\) and \(\hat{a}\) dominates \(a\) or \(b\).
\end{theorem}
\begin{proof}
    See Appendix \ref{appendix}.
\end{proof}

\section{Applications}

\subsection{Robust Paternalism}\label{subsec:smoking}

Consider the problem of a policymaker who seeks to increase the fraction of smokers who attempt to stop smoking. There are two actions: \(a\), attempt to stop smoking; and \(b\), make no such attempt. For simplicity, we assume that there are two states: \(\Theta \coloneqq\{\theta_S,\theta_F\}\), mnemonic for ``succeed'' and ''fail.'' Accordingly, we posit \(b_\theta = 0\) for all \(\theta\in\Theta\) and \(a_S>0>a_F\), so that attempting to quit is desirable in state \(S\) but undesirable in state \(F\).

The policymaker can modify the desired action by attaching success- and failure-contingent transfers to the act of attempting to quit.
Let \(t=(t_S,t_F)\in\mathbb{R}^\Theta\) denote transfers from the policymaker to the smoker, contingent on verifiable markers (e.g., biomarkers or program compliance).
The payoff of the modified quit action is then \(\hat a_\theta = a_\theta+t_\theta\) for all \(\theta\in\Theta\).

The policymaker holds a belief \(q\in\Delta(\Theta)\), whereas smokers may have heterogeneous subjective beliefs \(\mu\in\Delta(\Theta)\) and heterogeneous risk-averse utilities \(u\in \mathcal U_{\mathrm{RA}}\). \(\Pi\) denotes the population-wide probability distribution over \(\Delta(\Theta)\times \mathcal U_{\mathrm{RA}}\). The policymaker's objective is the adoption rate
\[
\alpha(t)\coloneqq \Pi\left(\left\{(\mu,u)\in\Delta(\Theta)\times \mathcal U_{\mathrm{RA}}\colon\ \mathbb{E}_\mu u(a_\theta+t_\theta)\ge u(0)\right\}\right).
\]

We contrast two potential policy objectives. In both, we impose the \textit{ex ante} budget balance constraint \(\mathbb{E}_q[t_\theta] \le 0\). The first policy objective seeks to maximize \(\alpha(t)\) only subject to budget balance:  
\[
\max_{t\in\mathbb{R}^\Theta}\ \alpha(t)
\quad \text{s.t.}\quad
\mathbb{E}_q[t_\theta]\le 0.\]
The second seeks to find the Pareto-improving policy to maximize \(\alpha(t)\). In other words, maximization is subject to the constraint that any type who would attempt quitting under \(a\) must still attempt under \(\hat a=a+t\). This is Pareto-improving in the sense that no DM pivots from action \(a\), which the policymaker knows to be better for them in the long run than action \(b\).
By Definition~\ref{definition1}, this is the requirement that \(\hat a\) is \(b\)-superior to \(a\) over \(\Delta(\Theta)\) and \(\mathcal U_{\mathrm{RA}}\):
\[\max_{t\in\mathbb{R}^\Theta}\ \alpha(t)
\quad \text{s.t.}\quad
\mathbb{E}_q[t_\theta]\le 0,
\ \text{and}\ 
\hat a \text{ is \(b\)-superior to \(a\) over \(\Delta(\Theta)\) and \(\mathcal{U}_{\mathrm{RA}}\)}.\]

In the first problem, penalizing relapse--\textit{viz.,} setting \(t_F < 0\)--can be optimal. Such a transfer shifts incentives toward the good contingency of quitting. A reward in state \(\theta_S\) financed by a penalty in state \(\theta_F\) effectively offers a bet on being in \(\theta_S\), and therefore mainly attracts smokers who are sufficiently optimistic about quitting and not too risk averse. Such schemes are most attractive when the key margin for optimizing adoption lies among these confident and not-too-risk-averse smokers.

However, as Theorem \ref{bigone} reveals, such relapse penalties are not compatible with Pareto-improving policies. Instead, the only permissible modifications are ``insurance-like'' in that they do not worsen the downside of attempting to quit. Budget balance then requires that any such downside relief be financed by reducing the upside unless outside funds are available.\footnote{To see this clearly, note that \(b\)-superiority adds three constraints in this environment: \(a_S+t_S > 0\), \(a_S+t_S \geq \lambda a_S\), and \(a_F+t_F\geq \lambda a_F\) for some \(\lambda \in [0,1]\). In particular, note that the last constraint implies that $t_F \geq 0$, and so by budget balance, $t_S\leq 0$.} 

The takeaway is therefore that any Pareto-improving policy must make the state of failure more palatable. As an example, consider ``Every Try Counts", which is a program launched by the U.S. Food and Drug Administration (FDA) in 2018. The goal of the program is to normalize failure when attempting to quit smoking and instead frame failure as progress towards the eventual goal of quitting. Viewing failure in a more positive light lessens the mental burden of being in state \(F\), so that \(t_F > 0\).

\subsection{Choosing Which Candidate to Run}

There is a unit mass of voters and two candidates \(a\) and \(b\) running for office. There is an unknown state of the world \(\theta \in \Theta\), and candidate \(a\) (\(b\)) produces monetary benefit \(a_\theta\) (\(b_\theta\)) to each voter in every state \(\theta \in \Theta\). Neither candidate dominates the other in the sense that there is at least one state of the world in which each provides a strictly higher monetary benefit to the populace than the other.

The party backing candidate \(a\) is contemplating whether to replace him with a different candidate \(\hat{a}\). Crucially, it knows neither the risk preferences of the populace (which may be heterogeneous) nor the beliefs of the populace (also possibly heterogeneous). Accordingly, it makes its choice based on a robust criterion, asking what are the candidates who have a better chance of winning versus \(b\), no matter the population's (concave) utilities or beliefs. Who cannot be worse than \(a\) versus \(b\)? Theorem \ref{bigone} tells us the answer:
\begin{remark}
    Candidate \(\hat{a}\) is no worse than \(a\) versus \(b\) if and only if she dominates a mixture of \(a\) and \(b\).
\end{remark}

In the context of politics, this can be understood, roughly, as a move to the political center. This strategy has been explicitly used in the past. For example, after Republican Ronald Reagan's landslide presidential victory in 1984, a group of Democrats founded the Democratic Leadership Council (DLC) in 1985 with the explicit goal of shifting the party towards the center in order to win elections. Likewise, Tony Blair's ``New Labour'' in the 1990s re-branded the Labour Party, focusing more on market economics and less on social redistribution in an attempt to broaden electoral appeal. More recently, in Germany, the Christian Democratic Union (CDU) has started embracing large parts of the Green Party's environmental agenda in order to attract its voters. 

\subsection{Robustly Optimal Bilateral Trade Modifications}

There is a buyer (\(B\)) and a seller (\(S\)). \(S\) possesses an asset that pays out \(v_\theta \in \mathbb{R}\) in state \(\theta\). The state is contractible and the status quo trade agreement sees transfers of size \(\gamma_\theta \in \mathbb{R}\) from \(B\) to \(S\) in each state \(\theta \in \Theta\). We assume that there exist states \(\theta', \theta'' \in \Theta\) for which \(v_{\theta'} > \gamma_{\theta'}\) and \(\gamma_{\theta''} > v_{\theta''}\).

\(B\)'s state-dependent payoff from transacting is \(v_\theta -\gamma_\theta\), and \(S\)'s is \(\gamma_\theta\) (budget balance). \(B\)'s outside option is the sure-thing \(0\) and \(S\)'s outside option is the asset, with random payoff \(v_\theta\) in each state \(\theta \in \Theta\).

We suppose that the status-quo is acceptable--given the agents' subjective beliefs about uncertainty and their (concave) utility functions, each is willing to participate in the arrangement. What modifications to the arrangement are robustly optimal in the sense that \(B\) and \(S\) will still remain willing to participate? Theorem \ref{bigone} reveals the answer.

We observe that for all \(\theta \in \Theta\) and for any \(\lambda \in \left[0,1\right]\),
\[\lambda \left(v_\theta - \gamma_\theta\right) = v_\theta - \lambda \gamma_\theta - \left(1-\lambda\right) v_\theta\text{,}\]
so that any new transfer that yields the requisite convex combination of \(B\)'s status quo payoff from accepting the terms and \(B\)'s outside option of \(0\) equals 
\[\lambda \gamma_\theta + \left(1-\lambda\right) v_\theta\text{,}\]
which is precisely the convex combination (with the same weight) of \(S\)'s status quo payoff and her outside option. This, plus our assumption of budget balance, implies
\begin{remark}
    The buyer and seller will both be willing to participate in a new trade agreement \(\left(\hat{\gamma}_\theta\right)_{\theta \in \Theta}\) if and only if there exists some \(\lambda \in \left(0,1\right]\) such that \(\hat{\gamma}_\theta = \lambda \gamma_\theta + \left(1-\lambda\right) v_\theta\) for all \(\theta \in \Theta\).
    %A new trade agreement, \(\left(\hat{\gamma}_\theta\right)_{\theta \in \Theta}\) must be acceptable if and only if there exists some \(\lambda \in \left(0,1\right]\) such that \(\hat{\gamma}_\theta = \lambda \gamma_\theta + \left(1-\lambda\right) v_\theta\) for all \(\theta \in \Theta\).
\end{remark}

\subsection{Robustly Optimal Insurance Modifications}

Now consider the scenario of a risk-neutral insurer and a consumer. The consumer's payoff without insurance is \(v_\theta\) in each state \(\theta \in \Theta\). The status-quo policy yields a payoff (net of the loss) of \(\alpha_\theta\) to the consumer in each state \(\theta \in \Theta\). We assume that there exist some \(\theta', \theta'' \in \Theta\) for which \(\alpha_{\theta'} > v_{\theta'}\) and \(\alpha_{\theta''} < v_{\theta''}\).
\begin{remark}
    The insurer and consumer will both be willing to accept a new contract \(\left(\hat{\alpha}_\theta\right)_{\theta \in \Theta}\) if and only if there exists some \(\lambda \in \left(0,1\right]\) such that \(\hat{\alpha}_\theta \geq \lambda \alpha_\theta + \left(1-\lambda\right) v_\theta\) for all \(\theta \in \Theta\).
    %A new contract, \(\left(\hat{\alpha}_\theta\right)_{\theta \in \Theta}\) must be acceptable if and only if there exists some \(\lambda \in \left(0,1\right]\) such that \(\hat{\alpha}_\theta \geq \lambda \alpha_\theta + \left(1-\lambda\right) v_\theta\) for all \(\theta \in \Theta\).
\end{remark}

One interesting implication of this remark is that the set of acceptable contracts could very well expose the consumer to more risk. Indeed, suppose that the status-quo policy is risk free, i.e., yields the consumer the same wealth in every state. The consumer's outside option (not buying insurance) is, of course, riskier. In this case, any actuarially-fair contract that the consumer must accept also exposes her to risk as her wealth is state-dependent.

\section{Variants and Generalizations}\label{xtensions}

In this section, we look beyond pairwise comparisons where we are uncertain about both the DM's belief and her utility. In particular, we study how to make an action better versus multiple alternatives, not just one, in two different environments. First, we assume that the DM's belief is known, shutting down one of the two dimensions of uncertainty. What makes a lottery better? Second, we restore the main specification in which we seek robustness with respect to both beliefs and utilities.

\subsection{How to Make a Lottery Better}\label{lotteries}
We begin by characterizing how to make an action better when we know a DM's belief but do not know her utility.

We assume that the DM has a finite set of actions \(A\) (with \(\left|A\right| = m+1 \geq 2\)). For any fixed \(\mu\), each action \(a \in A\) induces a lottery \(L_a\). For any \(a \in A\), set \(B = A \setminus \left\{a\right\}\) denotes the DM's set of actions other than \(a\). We also let \(\succ\) denote the DM's preference relation over lotteries that corresponds to her utility function. Our definition of a robust improvement in the known-lottery case is, therefore,\footnote{As we noted in the introduction, \cite{chompchomp} conducts a similar exercise to this, asking when \(L_a \succ L_b\) implies \(L_{\hat{a}} \succeq L_{b}\) for either a risk-averse or monotone DM.}
\begin{definition}
    Action \(\hat{a}\) \(B\)-improves upon action \(a\) over \(\left\{\mu\right\}\) and \(\mathcal{U}\) if, for all utilities in the specified class,
    \[L_a \succ L_b \text{ for all } b \in B \ \Rightarrow \ L_{\hat{a}} \succeq L_b \text{ for all } b \in B\text{.}\]
    \end{definition}
    
Concordant with the rest of this paper, our two leading classes of utility functions are those for a risk-averse DM--continuous, strictly increasing, and concave--and a monotone DM--continuous, and strictly increasing. The details of this definition are slightly inconsistent with the other definitions in this paper; e.g., the exact mirror would be 
\[\text{``For all } u \in \mathcal{U}, \ L_a \underset{(\succ)}{\succeq} L_b \text{ for all } b \in B \ \Rightarrow \ L_{\hat{a}} \underset{(\succ)}{\succeq} L_b \text{ for all } b \in B\text{.''}\]
This asymmetry is only for simplicity, allowing for an especially clean result.\footnote{To elaborate, our approach makes use of the revealed-preference results of \cite{fishburn1975separation}. Our definition allows us to use his Corollary 3, in which each revealed preference is strict. Otherwise, we could appeal to \cite{fishburn1975separation}'s result for the ``finite composite case,'' Theorem 3, which would produce a messier result than what we have (Theorem \ref{theoremlottery}).} 

For a fixed \(a \in A\), we enumerate the lotteries corresponding to the other actions in \(A\) (so, the actions in \(B\)) \(L_1, \dots, L_m\). We say that \(\lambda \in \mathbb{R}^m\) is a convex weight if \(0 \leq \lambda_j \leq 1\) for all \(j \in \left\{1,\dots,m\right\}\) and \(\sum_{j=1}^m \lambda_j \leq 1\). Letting \(\trianglerighteq\) denote the dominance relation in the specified class--\textit{viz.}, if the DM is risk-averse, \(L_a \trianglerighteq L_b\) means that lottery \(L_a\) second-order stochastically dominates (SOSD) lottery \(L_b\); and if the DM is monotone \(L_a \trianglerighteq L_b\) means that \(L_a\) first-order stochastically dominates (FOSD) \(L_b\), we have
\begin{theorem}\label{theoremlottery}
    Action \(\hat{a}\) \(B\)-improves upon action \(a\) over \(\left\{\mu\right\}\) and \(\mathcal U\) if and only if for all \(b \in B\) there exists a convex weight \(\lambda\) such that
    \[\sum_{i=1}^{m} \lambda_i L_i + \left(1-\sum_{i=1}^{m} \lambda_i\right) L_{\hat{a}} \trianglerighteq \sum_{i=1}^{m} \lambda_i L_a + \left(1-\sum_{i=1}^{m} \lambda_i\right) L_b\text{.}\]
\end{theorem}
\begin{proof}
    We observe that action \(\hat{a}\) \(B\)-Improving upon action \(a\) is equivalent to the statement that there does not exist a utility function in the specified class such that
    \[L_a \succ L_b \text{ for all } b \in B, \text{ and } L_b \succ L_{\hat{a}} \text{ for some } b \in B\text{.}\]
    By Corollary 3 in \cite{fishburn1975separation}, this holds if and only if for all \(b \in B\) \[\sum_{i=1}^{m} \lambda_i L_i + \left(1-\sum_{i=1}^{m} \lambda_i\right) L_{\hat{a}} \trianglerighteq \sum_{i=1}^{m} \lambda_i L_a + \left(1-\sum_{i=1}^{m} \lambda_i\right) L_b\text{,}\]
    for some convex weight \(\lambda\).\end{proof}
    When there are just two actions (so \(B = \left\{b\right\}\)), we alter the definition of improvement in the obvious way and have
    \begin{corollary}
        Action \(\hat{a}\) \(b\)-Improves upon action \(a\) over \(\left\{\mu\right\}\) and \(\mathcal{U}\) if and only if there exists \(\lambda \in \left[0,1\right]\) such that
    \[\lambda L_b + \left(1-\lambda\right) L_{\hat{a}} \trianglerighteq \lambda L_a + \left(1-\lambda\right) L_b\text{.}\]
    \end{corollary}

    Fixing the DM's belief in the revealed-preference exercise results in a linear problem, which allows us to use tools that we are unable to employ otherwise. Specifically, Corollary 3 in \cite{fishburn1975separation}--and therefore Theorem \ref{theoremlottery}--is implied by a separating hyperplane theorem. This produces our observation that if $\hat{a}$ does not \(b\)-improve on $a$, then we cannot find a combination of $\hat{a}$ and $b$ that dominates a combination of $b$ and $a$. 

\subsection{One Versus Many}\label{many}

We finish by restoring both dimensions of uncertainty. We look for improvements to \(a\) that make it more attractive in comparison to multiple alternatives, irrespective of the DM's belief and utility. Consider the example in which the manager of firm $1$ is deciding between products $a$ and $\hat{a}$, but now the firm is competing against a product from firm $2$ as well as a product from firm $3$. How does the inclusion of firm $3$ affect the manager's decision? 

The main difficulty with considering multiple alternatives in general is that for different levels of risk aversion, the DM's ``next-best'' action may be different. To illustrate, consider the following example with actions $a$, $b$, and $c$ and two states $\theta \in \{0,1\}$. Let $c_0 = c_1 = 2/5$, $a_1 = b_0 = 0$, and $a_0 = b_1 = 1$. Then if the DM is risk-neutral, only $a$ and $b$ will ever be chosen for any belief. If, however, the DM is more risk-averse, say $u(x) = \sqrt{x}$, then for the most uncertain beliefs, $c$ is the best choice. This complicates the analysis of the multi-alternative environment significantly because the set of relevant alternatives depends on the level of risk aversion. 

Two additional assumptions make our analysis of multiple alternatives tractable. We continue to assume that \(A\) is finite, with \(m+1 \geq 2\) actions and maintain our convention that for a given \(a\), \(B = A \setminus \left\{a\right\}\). First, we assume that the state space is \textbf{Rich:} for each action \(a \in A\), there exists at least one state \(\theta \in \Theta\) such that \(a_\theta > \max_{b \in B} b_\theta\)--that is, each action is uniquely optimal in at least one state. Second, we assume that preferences take the following \textbf{Single-Peaked} form: for each action \(a \in A\), and any pair of states \(\theta, \theta' \in \Theta\), either
\begin{enumerate}
    \item there exists \(\lambda \in \left[0,1\right]\) such that \[\lambda a_\theta + (1-\lambda) a_{\theta'} \geq \max_{b \in B}\left\{\lambda b_\theta + (1-\lambda) b_{\theta'}\right\}, \quad \text{or}\]
    \item there exists \(b \in B\) such that \(b_\theta \geq a_\theta\) and \(b_{\theta'} \geq a_{\theta'}\).
\end{enumerate}
The meaning of single-peaked is rather subtle. For any action \(a\) and any pair of states, if there exists a belief supported only on those two states that rationalizes that action, then $a$ must always be included in the set of relevant actions. If instead there exists an action that dominates \(a\) on those two states, then action $a$ need never be included as a relevant action. Either way, this results in a constant set of relevant actions, regardless of the level of risk aversion.  

This set of specifications is satisfied by the oft-encountered ``quadratic loss'' monetary reward: \(\Theta = \left[0,1\right]\), \(A \subset \left[0,1\right]\) and \(a_\theta = -\left(a-\theta\right)^2\). It also holds in our earlier setting of just two actions.

Our superiority definition is
\begin{definition}
 Action \(\hat{a}\) is \textit{\(B\)-superior} to action \(a\) over \(\Delta\) and \(\mathcal{U}\) if for all \(\mu \in \Delta\), for all \(u \in \mathcal{U}\), \[\mathbb{E}_{\mu}u\left(a_{\theta}\right) \underset{(>)}{\geq} \max_{b \in B}\mathbb{E}_{\mu}u\left(b_{\theta}\right) \quad \Rightarrow \quad \mathbb{E}_{\mu} u\left(\hat{a}_{\theta}\right) \underset{(>)}{\geq}\max_{b \in B}\mathbb{E}_{\mu} u\left(b_{\theta}\right)\text{.}\]\end{definition}
We extend our notation in an intuitive way: \[\begin{split}
    \mathcal{A}_B \coloneqq &\left\{\theta \in \Theta \colon a_{\theta} > \max_{b \in B}b_{\theta}\right\}, \quad \mathcal{B}_B \coloneqq \left\{\theta \in \Theta \colon a_{\theta} < \max_{b \in B}b_{\theta}\right\},\\ &\text{and} \quad \mathcal{C}_B \coloneqq \left\{\theta \in \Theta \colon a_{\theta} = \max_{b \in B} b_{\theta}\right\}\text{.}
\end{split}\]
Our result is a natural generalization of Theorem \ref{bigone}.
\begin{proposition}\label{manyactions}
    Fix \(a\) and \(B\). Action \(\hat{a}\) is \(B\)-superior to action \(a\) over \(\Delta\) and \(\mathcal{U}_{RA}\) if and only if i) \(\hat{a}_\theta > \max_{b \in B} b_\theta\) for all \(\theta \in \mathcal{A}_B\); and ii) \(\hat{a}\) (weakly) dominates a mixture of \(a\) and \(b\), for all \(b \in B\).
\end{proposition}
\begin{proof}
    See Appendix \ref{manyactionsproof}.\end{proof}
As we note in the proof, that aggregating Theorem \ref{bigone}'s pairwise conditions implies that \(\hat{a}\) improves upon \(a\) versus a set of actions is relatively simple. That this feature is necessary is more surprising, and this necessity is engendered by the additional richness and single-peakedness structure we place on the decision problem.

\section{Information Acquisition}\label{infoac}

Up until this point we have focused on a simple decision problem in which a DM has a belief over the states of the world and chooses an optimal action based on this belief. It is easy to see that \(b\)-Superiority extends to the scenario in which a DM obtains exogenous information before making a decision. That is, \(\hat{a}\) is \(b\)-Superior to \(a\) if and only if for any \(\mu \in \inter \Delta\) and any Bayes-plausible (martingale) distribution over posteriors \(F\), the DM chooses \(\hat{a}\) with a higher probability than she chooses \(a\) (where, in both cases, \(b\) is chosen with the complementary probability). 

There has been a recent explosion of interest in problems with endogenous information acquisition. Suppose we want to know what the properties of \(\hat{a}\) are that make it chosen more versus \(b\) than \(a\) is when the DM flexibly acquires information before taking a choice. 

We begin with the binary-state environment (\(\Theta = \left\{0,1\right\}\)). We maintain our assumptions that neither \(a\) nor \(b\) dominates the other and that \(a\) is uniquely optimal in state \(0\) and \(b\) in state \(1\). Given a prior \(\mu_{0} \in \left(0,1\right)\), and defining an agent's value function, in belief \(\mu\), as
\[V(\mu) \coloneqq \max_{\tilde{a} \in \left\{a,b\right\}}\mathbb{E}_\mu u(\tilde{a}_\theta)\text{,}\]
the DM's flexible information acquisition problem is
\[\tag{\(1\)}\label{ri1}\max_{F \in \mathcal{F}\left(\mu_{0}\right)}\int_{\Delta}V\left(\mu\right)dF\left(\mu\right) - D\left(F\right)\text{,}\]
where \(\mathcal{F}\left(\mu_{0}\right)\) is the set of Bayes-plausible distributions given prior \(\mu_0\) and \(D\) is a uniformly posterior-separable (UPS) cost functional.\footnote{See, e.g., \cite{caplin2022rationally}. \(D \colon \Delta^2 \to \mathbb{R}\) is UPS if \(D\left(F\right) = \int_{\Delta}c\left(\mu\right)dF\left(\mu\right) - c\left(\mu_{0}\right)\) for some strictly convex and twice continuously differentiable on \(\inter \Delta\) function \(c \colon \Delta \to \mathbb{R}\).} Similarly, the value function when the menu is \(\hat{a}\) and \(b\) is \(\hat{V}(\mu) \coloneqq \max_{\tilde{a} \in \left\{\hat{a},b\right\}}\mathbb{E}_\mu u(\tilde{a}_\theta)\), and when she has this menu the DM solves
\[\tag{\(2\)}\label{ri2}\max_{\hat{F} \in \mathcal{F}\left(\mu_{0}\right)}\int_{\Delta}\hat{V}\left(\mu\right)d\hat{F}\left(\mu\right) - D\left(\hat{F}\right)\text{.}\]

Any solution \(F^*\) (\(\hat{F}^*\)) to Program \ref{ri1} (\ref{ri2}) produces an optimal choice probability of action \(a\) (\(\hat{a}\)) by the DM, which we define to be \(p_{D,\mu_0, u} \equiv p\) (\(\hat{p}_{D,\mu_0, u} \equiv \hat{p}\)).
\begin{definition}
    \emph{\(\hat{a}\) is selected more than \(a\)} if for any UPS \(D\), prior \(\mu_0 \in \inter \Delta\), and utility \(u \in \mathcal{U}_{RA}\); for any optimal \(p\), there exists an optimal choice probability \(\hat{p} \geq p\).
\end{definition}
\begin{proposition}\label{choiceprop}
    \(\hat{a}\) is selected more than \(a\) if and only if \(\hat{a}\) dominates \(a\) or \(b\).
\end{proposition}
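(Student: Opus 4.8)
The plan is to reduce each flexible-information problem to a concavification and then show that statewise dominance of $a$ or $b$ is exactly what shifts the DM's optimal posteriors, hence her choice probability, in the right direction. With $\Theta=\{0,1\}$ I identify a belief with the probability $\mu\in[0,1]$ of state $1$ and write $A(\mu)=\mathbb{E}_\mu u(a_\theta)$, $B(\mu)=\mathbb{E}_\mu u(b_\theta)$, $\hat A(\mu)=\mathbb{E}_\mu u(\hat a_\theta)$, all affine in $\mu$. Since $a$ is uniquely optimal at $0$ and $b$ at $1$, $V=\max\{A,B\}$ is convex and piecewise affine with a single upward kink at the cutoff $\bar\mu\in(0,1)$. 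As $D$ is UPS with kernel $c$, Program \ref{ri1} has value $\operatorname{cav}(V-c)(\mu_0)+c(\mu_0)$, with the optimal $F^*$ supported where $\operatorname{cav}(V-c)$ touches $V-c$. On each side of $\bar\mu$ the map $V-c$ is strictly concave (affine minus strictly convex), so the only non-concavity is the kink: $\operatorname{cav}(V-c)$ coincides with $V-c$ except on one bridging interval $[\mu_L,\mu_R]\ni\bar\mu$, on which it is the chord touching the $A$-piece at $\mu_L$ and the $B$-piece at $\mu_R$ (capped so that $\mu_L=0$ or $\mu_R=1$ when a touch point would leave $[0,1]$). Hence the optimal probability of choosing $a$ is the ``tent'' $p(\mu_0)=1$ on $[0,\mu_L]$, $p(\mu_0)=(\mu_R-\mu_0)/(\mu_R-\mu_L)$ on $[\mu_L,\mu_R]$, and $0$ on $[\mu_R,1]$; the analogous description with $\hat A$ gives touch points $\hat\mu_L\le\hat\mu_R$ and probability $\hat p$.

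For sufficiency I would split on which action $\hat a$ dominates. If $\hat a_\theta\ge b_\theta$ in both states, then $\hat A\ge B$ everywhere, $\hat V=\hat A$ is affine, no information is bought, and $\hat p=1\ge p$. So suppose instead $\hat a_\theta\ge a_\theta$ in both states with $\hat a_1<b_1$; then $f_{\hat A}:=\hat A-c\ge A-c=:f_A$ pointwise, while $f_B:=B-c$ is unchanged. Because both tents are continuous, nonincreasing and piecewise affine, $\hat p\ge p$ on all of $[0,1]$ as soon as $\hat\mu_L\ge\mu_L$ and $\hat\mu_R\ge\mu_R$ (it then suffices to check the inequality at the finitely many breakpoints). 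The right touch point is clean: for the tangent to $f_B$ at $s$, let $G(s)$ be the amount by which $f_A$ pokes above that line; the common tangent is the $s$ with $G(s)=0$, the poke occurs to the left of $s$, and an envelope computation makes $G$ decreasing through its zero. Raising $f_A$ to $f_{\hat A}\ge f_A$ only increases the poke, $\hat G\ge G$, so the zero moves weakly right, giving $\hat\mu_R\ge\mu_R$ (and $\hat\mu_R=1\ge\mu_R$ in the capped case).

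The left touch point is where I expect the real work to be. Writing $\operatorname{int}_A(m)=\max_\mu\{f_A(\mu)-m\mu\}$ for the intercept of the slope-$m$ supporting line, the common-tangent slope is the root of the increasing function $\operatorname{int}_A-\operatorname{int}_B$, and $f_{\hat A}\ge f_A$ gives $\operatorname{int}_{\hat A}\ge\operatorname{int}_A$, so the $\hat a$-tangent is weakly flatter, $m_{\hat a}\le m_a$. This flatness alone does not yield $\hat\mu_L\ge\mu_L$: when the improvement is concentrated in state $0$ the left curve can steepen (its slope $\hat A'$ falls below $A'$), and then one needs the sharper bound $m_a-m_{\hat a}\ge A'-\hat A'$, extracted from the equal-intercept condition and reconciled with the boundary cap $\hat\mu_R=1$. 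The cleanest route I foresee is to compare the two common tangent lines directly: both support the fixed concave curve $f_B$, at $\mu_R$ and at $\hat\mu_R\ge\mu_R$, and locating where they cross pins down the order of the left touch points even at the boundary. This is the step that genuinely exploits having only two states, and it is the main obstacle.

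For necessity I would argue the contrapositive: if $\hat a$ dominates neither $a$ nor $b$, I construct a UPS cost, prior and concave $u$ with $\hat p<p$. When $\hat a$ is beaten in a single state by the locally optimal action---say $\hat a_0<b_0$---a prior near $0$ with costly information already gives $\hat p<p$ with no learning. The substantive case is $\hat a$ lying strictly between $a$ and $b$ in both states, where $\hat a$ is still $b$-superior by Corollary \ref{corr1} yet dominates neither, and where both the no-information and the costless-information limits give $\hat p=p$; there I would tune the curvature of $c$ to an intermediate level that makes the DM separate the states, and use the statewise shortfall of $\hat a$ relative to $a$ to place the $\hat a$-chord to the left of the $a$-chord, producing $\hat p<p$ at an interior prior. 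Again the two-state single-crossing tent shape is what makes the comparison decisive.
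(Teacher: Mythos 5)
Your bridge-and-tent reduction is exactly the machinery underlying the paper's own proof (the paper's two first-order conditions are your tangency conditions at \(\mu_L\) and \(\mu_R\)), but your sufficiency argument is incomplete precisely where you say it is: you never establish \(\hat{\mu}_L \geq \mu_L\). The paper settles both endpoints at once by applying the implicit function theorem to the tangency system, showing \(\mu_L\) and \(\mu_R\) are increasing in each of \(\alpha_0 = u(a_0)\) and \(\alpha_1 = u(a_1)\). In fact the gap closes inside your own framework, more easily than you fear: because \(A\) and \(\hat{A}\) are affine, the tangent line to \(f_{\hat{A}} = f_A + (\hat{A}-A)\) at any point \(t\) equals the tangent line to \(f_A\) at \(t\) plus the affine function \(\hat{A}-A\), which is nonnegative on \(\left[0,1\right]\). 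So if \(H(t)\) denotes the amount by which the fixed curve \(f_B\) pokes above the tangent to the left curve at \(t\)---an increasing function of \(t\) whose zero is the left touch point---then passing from \(a\) to \(\hat{a}\) lowers \(H\) pointwise and its zero moves weakly right: the mirror image of your right-endpoint argument. Your worry about \(\hat{A}' < A'\) dissolves once the comparison is made between tangents at the \emph{same} touch point, where the entire line (not merely its slope) shifts up by \(\hat{A}-A \geq 0\).

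The necessity direction has a more serious, unacknowledged gap: your two cases are not exhaustive. If \(\hat{a}_0 > a_0 > b_0\) and \(\hat{a}_1 < a_1 < b_1\) (the paper's case i, \(\ell_{\hat{a}}\) a clockwise rotation of \(\ell_a\)), then \(\hat{a}\) dominates neither action, is dominated by neither, is not between them in either state, and no extreme-prior, no-learning comparison separates the menus: near \(\mu = 0\) both menus give \(p = \hat{p} = 1\), near \(\mu = 1\) both give \(p = \hat{p} = 0\). Worse, a risk-neutral cutoff comparison can point the wrong way---take \(a_0 = 1, a_1 = 0\), \(b_0 = 0, b_1 = 1\), \(\hat{a}_0 = 2, \hat{a}_1 = -1/2\), so that \(\hat{\mu} = 4/7 > 1/2 = \bar{\mu}\)---hence any construction that treats utility as essentially linear must fail. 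One must first invoke the failure of \(b\)-superiority (Theorem \ref{bigone}) to manufacture a strictly concave \(u\) with \(\hat{\mu}_u < \bar{\mu}_u\), and then exhibit a UPS cost under which \(p = 1\) while every optimal \(\hat{p} < 1\); the paper does this with a three-posterior construction resting on Lemma A.1 of the cited flexibility paper. Separately, your case ii sketch leaves unproven exactly its crux: there \(\hat{a}\) is worse than \(a\) in state \(0\) but \emph{better} in state \(1\), and since both touch points are increasing in both \(\alpha_0\) and \(\alpha_1\), the two payoff changes push \(\hat{\mu}_R\) in opposite directions, so ``statewise shortfall'' alone does not place the \(\hat{a}\)-bridge strictly to the left of the \(a\)-bridge. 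Making that step rigorous requires an explicit construction (the paper again uses Lemma A.1 with an auxiliary piecewise-affine envelope) or a limiting argument with a concrete cost family, together with some cited device guaranteeing that UPS costs generating the prescribed learning behavior exist at all---a point your ``tune the curvature'' step never engages.
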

\begin{proof}
    A full proof may be found in Appendix \ref{choicepropproof}.
\end{proof}
Most of the work in proving the proposition is in the necessity portion. First consider what happens if $\hat{a}$ is not $b$-superior to $a$. Then, we can find a \(u\) such that \(\hat{p} < \bar{p}\). Take such a \(u\), in which case we can then always find a cost function such that \(a\) is selected with probability \(1\) when the menu is \(\left\{a,b\right\}\) but when the menu is \(\left\{\hat{a},b\right\}\), either the DM learns in a nontrivial manner, and hence selects \(\hat{a}\) with probability strictly less than \(1\), or does not learn but selects \(b\) with probability \(1\). 

To understand why $\hat{a}$ must dominate $a$ or $b$, consider how the DM chooses to learn when there are two states and two actions. The DM chooses an $F^\ast$ such that there are two posteriors, \(\mu_{L}^a\) and $\mu_H^a \geq \mu_L^a$ such that she chooses $a$ with probability 1 if $\mu_0 \leq \mu_L^a$, $b$ with probability 1 if $\mu_0 \geq \mu_H^a$, and $a$ with probability $p \in (0,1)$ if $\mu_0 \in (\mu_L^a, \mu_H^a)$. 

Next, take $\hat{a}$ to be such that $\ell_{\hat{a}}$ is a counterclockwise rotation from $\ell_a$ towards $\ell_b$ (recall Figure \ref{fig1}). This is equivalent, however, to an increase in learning cost (if payoffs had remained the same) for a risk-neutral DM. A higher marginal cost of information leads to less learning; i.e., $(\mu_L^{\hat{a}}, \mu_H^{\hat{a}}) \subseteq (\mu_L^a, \mu_H^a)$. This means that we can find a prior belief such that $\mu_0 \in (\mu_L^a, \mu_H^a)$ but $\mu_0 \geq \mu_H^{\hat{a}}$. In other words, action $a$ has a positive probability of being chosen when the menu is $\{a,b\}$, but action $\hat{a}$ will never be chosen when the menu is $\{\hat{a},b\}$. Hence, $b$-superiority is clearly not enough to guarantee that $\hat{a}$ is chosen more than $a$.

\subsection{Three or More States}

\begin{figure}
        \centering
        \includegraphics[width=.6\linewidth]{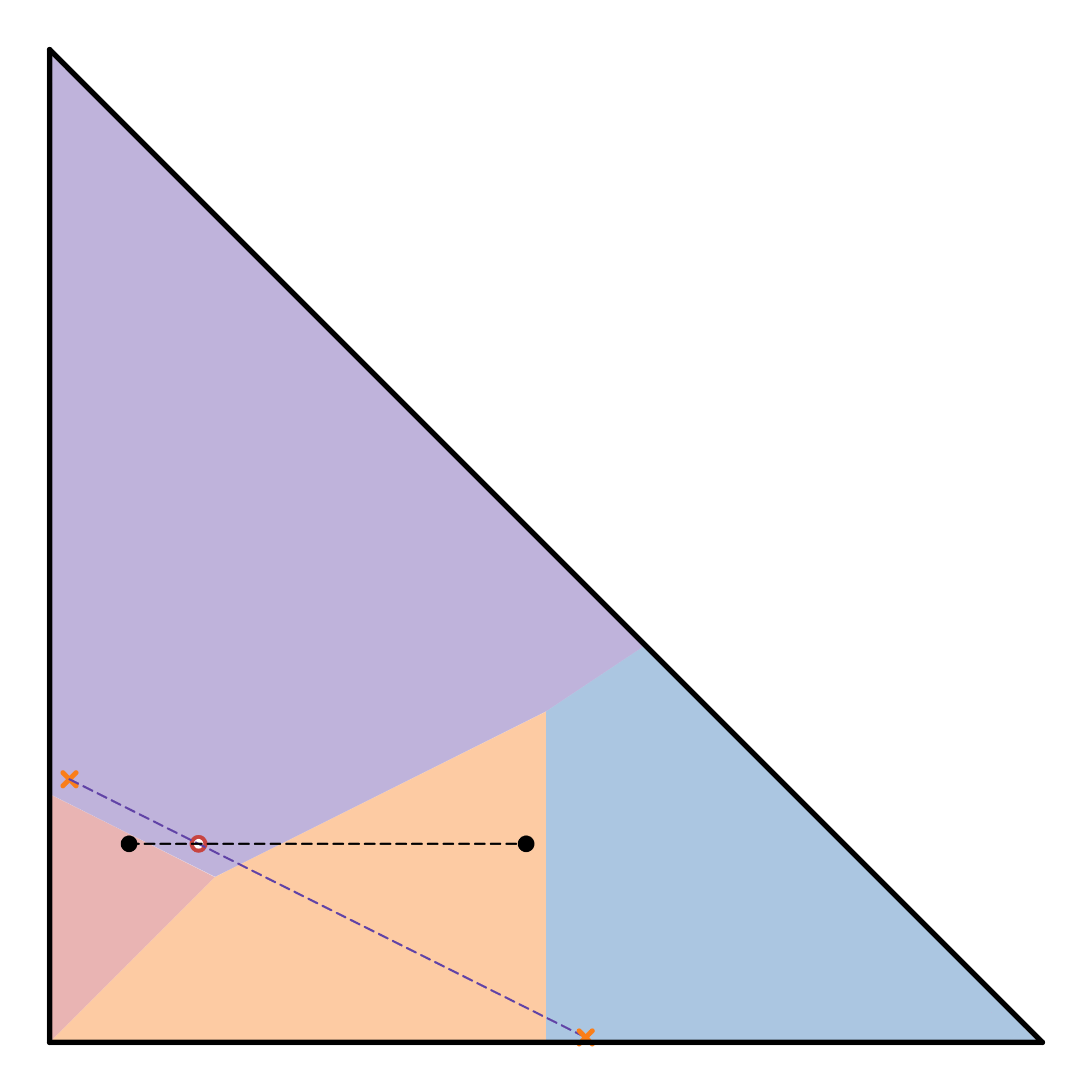}
        \caption{The insufficiency of dominance for three or more states}
        \label{fig2}
    \end{figure}

Alas, when there are three states, \(\hat{a}\) dominating \(a\) no longer implies that \(\hat{a}\) is selected more than \(a\), as we now illustrate. Let \(\Theta = \left\{0,1,2\right\}\) with \(\mu^2 \coloneqq \mathbb{P}(2)\) and \(\mu^1 \coloneqq \mathbb{P}(1)\). We denote \(\alpha_\theta \equiv u(a_\theta) \) for all \(\theta\). Let \(\hat{\alpha}_2 = 3/2 = \alpha_2 + 1/2\), \(\alpha_1 = -1 = \hat{\alpha}_1\) and \(\hat{\alpha}_0 = \alpha_0 = 0\); and \(\beta_\theta = 0\) for all \(\theta \in \Theta\). Evidently, \(\hat{a}\) weakly dominates \(a\) (and \(\hat{a} \neq a\)).

    We have \[V(\mu^1,\mu^2) = \max\left\{\mu^2 - \mu^1,0\right\}, \quad \text{and} \quad \hat{V}(\mu^1,\mu^2) = \max\left\{\frac{3}{2}\mu^2 - \mu^1,0\right\}\text{.}\]
    We then define
    \[W(\mu^1,\mu^2) \coloneqq \max\left\{V, \hat{V} + \frac{1}{4}\mu^1-\frac{1}{8}\right\}\text{,}\]
    and appeal to Lemma A.1 in \cite{flexibilitypaper}, which states that if a set of produced beliefs does not contain multiple posteriors that justify the same action, then there exists a cost function that generates it. We therefore conclude that there exists a UPS cost such that the four points
    \[\tag{\(3\)}\label{ex4}\left\{\left(\frac{2}{25},\frac{1}{5}\right), \left(\frac{12}{25},\frac{1}{5}\right), \left(\frac{1}{50},\frac{53}{200}\right), \left(\frac{27}{50},\frac{1}{200}\right)\right\}\]
    are the support of optimal learning given value function \(W\). Accordingly, we take such a cost function and fix prior \(\left(3/20, 1/5\right)\), then compute \(p = 33/40 > 3/4 = \hat{p}\).\footnote{This calculation simply applies Bayes plausibility to the given posteriors. Specifically, $3/20 = 2/25 p + 12/25 (1 - p)$.}

    Why is $\hat{a}$ chosen less often, even though it dominates $a$? When the DM faces a choice between $a$ and $b$, she only learns about whether the state is $0$ or $1$, as Expression \ref{ex4} reveals. She would like to avoid choosing $a$ in state \(1\), so this is the best approach. If, however, the DM has a choice between $\hat{a}$ and $b$, she now learns about all three, either becoming relatively confident that the state is not \(1\) ($\hat{a}$ is preferable) or relatively confident that the state is not \(2\) ($b$ is preferable). This is enough to make it so that $a$ is chosen more than $\hat{a}$.

    Figure \ref{fig2} illustrates this example on the \(2\)-simplex. The four colored polygons are the regions in which \(V\) lies above the rotated value function \(\hat{V}\) and either \(a\) is optimal (red) or \(b\) is optimal (orange); and the rotated value function \(\hat{V}\) lies above \(V\) and either \(\hat{a}\) is optimal (purple) or \(b\) is optimal (blue). The prior is the hollow red dot. The black dots are the support of \(F^*\), and the orange \(x\)s are the support of \(\hat{F}^*\).

    Leaving the single-dimensional environment corresponding to the two-state simplex engenders this result. What we are doing is taking a value function that in a sense ``twist things,'' making it so that the learning in the two problems is not along the same line segment. There is enough freedom then to pick an appropriate prior so that \(p > \hat{p}\). It is important to note that the relative freedom afforded by the class of UPS cost functions is crucial--if learning must be via a sequential-sampling procedure, if \(\hat{a}\) dominates \(a\) then \(\hat{a}\) is selected more than \(a\) (Proposition \(7\) in \cite{gonccalves2022sequential}).

    We began this paper with the examples of a firm looking to robustly improve its product versus another, or a political party seeking to enlarge its platform. This example reveals that if beliefs are endogenous, even a seemingly obvious improvement can be harmful. That is, a firm's product can be made unambiguously better--even via a price decrease--yet the competitor's product may be chosen more. An indisputable upgrade to a political party's policies can result in a lost election.

\bibliography{sample.bib}
\bibliographystyle{plainnat}

\appendix

\section{Omitted Proofs}\label{appendix}

\subsection{Theorem \ref{bigone} ``Only If'' Proof}\label{Thm_Main}

We say that \emph{\(\hat{a}\) pairwise-dominates a collection of mixtures of \(a\) and \(b\)} if for any pair \(\left(\theta,\theta'\right) \in \mathcal{A} \times \mathcal{B}\), there exists a \(\lambda_{\theta,\theta'} \in \left[0,1\right]\) such that 
\[\label{in1}\tag{\(1\)}\hat{a}_\theta \geq \lambda_{\theta,\theta'} a_\theta + \left(1-\lambda_{\theta,\theta'}\right)b_\theta \quad \text{and} \quad \hat{a}_{\theta'} \geq \lambda_{\theta,\theta'} a_{\theta'} + \left(1-\lambda_{\theta,\theta'}\right)b_{\theta'}\text{.}\]
In other words, $\hat{a}$ gives a higher payoff than a mixture of $a$ and $b$ in both $\theta$ and $\theta^\prime$.

The first of the two necessity steps is showing that \(b\)-superiority implies pairwise dominance. 
\begin{lemma}\label{lemma:Nec1}
    \(\hat{a}\) is \(b\)-superior to \(a\) only if i) for any \(\theta^{\dagger} \in \mathcal{C}\), \(\hat{a}_{\theta^{\dagger}} \geq a_{\theta^{\dagger}}\); ii) for any \(\theta \in \mathcal{A}\), \(\hat{a}_\theta > b_{\theta}\); and iii) \(\hat{a}\) pairwise-dominates a collection of mixtures of \(a\) and \(b\).
\end{lemma}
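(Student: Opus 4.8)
The plan is to prove the contrapositive of each of (i)--(iii): whenever the stated condition fails, I produce a belief $\mu$ and an admissible utility $u$ (strictly increasing, weakly concave, continuous) with $\mathbb{E}_\mu u(a_\theta) \ge \mathbb{E}_\mu u(b_\theta)$ but $\mathbb{E}_\mu u(\hat a_\theta) < \mathbb{E}_\mu u(b_\theta)$ (or the analogous strict-implication violation), contradicting $b$-superiority. Every $\mu$ I use is a two-point measure $\mu = p\,\delta_\theta + (1-p)\,\delta_{\theta'}$, a legitimate Borel probability measure that collapses the comparison to a two-state problem with payoff vectors $(a_\theta,a_{\theta'})$, $(b_\theta,b_{\theta'})$, $(\hat a_\theta,\hat a_{\theta'})$ in $\mathbb{R}^2$. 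Conditions (i) and (ii) fall out of point masses: if $\hat a_{\theta^{\dagger}} < a_{\theta^{\dagger}} = b_{\theta^{\dagger}}$ for some $\theta^{\dagger} \in \mathcal{C}$, then $\delta_{\theta^{\dagger}}$ makes $a$ and $b$ indifferent while $u(\hat a_{\theta^{\dagger}}) < u(b_{\theta^{\dagger}})$ for every $u$; and if $\hat a_\theta \le b_\theta$ for some $\theta \in \mathcal{A}$, then $\delta_\theta$ gives $u(a_\theta) > u(b_\theta) \ge u(\hat a_\theta)$, violating the strict implication.

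For (iii), suppose pairwise dominance fails at some $(\theta,\theta') \in \mathcal{A}\times\mathcal{B}$, i.e.\ $(\hat a_\theta,\hat a_{\theta'})$ lies outside the up-and-to-the-right region swept out by the mixture segment $\lambda \mapsto (a^\lambda_\theta, a^\lambda_{\theta'})$, $\lambda\in[0,1]$, running from $(b_\theta,b_{\theta'})$ down to $(a_\theta,a_{\theta'})$. I exhaust the failure through three scenarios. First, if $(\hat a_\theta,\hat a_{\theta'})$ lies strictly below the line through $(a_\theta,a_{\theta'})$ and $(b_\theta,b_{\theta'})$, I take $u$ to be the (weakly concave) identity and choose the unique $p^\ast\in(0,1)$ equating $\mathbb{E}_\mu u(a_\theta)$ and $\mathbb{E}_\mu u(b_\theta)$; since that line is exactly the level set $\{x : p^\ast x_\theta + (1-p^\ast) x_{\theta'} = \mathbb{E}_\mu u(b_\theta)\}$, being strictly below it yields $\mathbb{E}_\mu u(\hat a_\theta) < \mathbb{E}_\mu u(b_\theta)$. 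Otherwise $(\hat a_\theta,\hat a_{\theta'})$ is on or above the line yet outside the region, which forces either $\hat a_\theta < b_\theta$---handled by the point mass $\delta_\theta$ as in (ii)---or $\hat a_{\theta'} < a_{\theta'}$, in which case the line geometry additionally forces $\hat a_\theta > a_\theta$, so that $\hat a_\theta > a_\theta > b_\theta$ and $\hat a_{\theta'} < a_{\theta'} < b_{\theta'}$.

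This last scenario is where genuine risk aversion is indispensable and is the crux of the proof. Solving the two desired inequalities for the mixing weight, a violating $p$ exists as soon as one can find an admissible $u$ with
\[
\frac{u(\hat a_\theta) - u(a_\theta)}{u(a_\theta) - u(b_\theta)} \;<\; \frac{u(a_{\theta'}) - u(\hat a_{\theta'})}{u(b_{\theta'}) - u(a_{\theta'})}.
\]
The decisive feature is that this inequality compares ratios of utility differences internal to each state's own payoffs; consequently the (entirely unrestricted) relative magnitudes of the state-$\theta$ and state-$\theta'$ payoffs never obstruct it. Both sides are pushed the favorable way by concavity, and I realize the inequality by taking $u$ piecewise linear with non-increasing slopes that drop sharply at $a_\theta$ (shrinking the left-hand numerator relative to its denominator) and at $a_{\theta'}$ (inflating the right-hand numerator relative to its denominator). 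A brief split on whether $a_\theta \le a_{\theta'}$ or $a_\theta > a_{\theta'}$ confirms that a single globally concave slope profile accommodates both drops, after which any $p$ in the induced nonempty interval completes the violation.

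The main obstacle is this final construction: because the six payoff numbers may interleave on the real line in any fashion, one must check that the two required slope drops are jointly compatible with a single non-increasing slope profile. The reduction to within-state ratios is what defuses this, leaving only the short two-case verification; all other parts---conditions (i), (ii), and the below-the-line and $\hat a_\theta < b_\theta$ scenarios---are routine point-mass and risk-neutral arguments.
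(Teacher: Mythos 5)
Your proposal is correct and follows essentially the same route as the paper: reduce to two-point beliefs $p\,\delta_\theta + (1-p)\,\delta_{\theta'}$, dispose of conditions (i)--(ii) and the risk-neutral (below-the-line) failure directly, and for the remaining configuration $b_\theta < a_\theta < \hat a_\theta$ and $\hat a_{\theta'} < a_{\theta'} < b_{\theta'}$ exhibit a piecewise-linear concave utility whose slope drops past the $a$-payoffs so that the indifference beliefs satisfy $\hat\mu_u < \bar\mu_u$. The paper's construction kinks only at $a_{\theta'}$ (with slope $\iota \to 0$ above it) and splits on $b_\theta \gtrless a_{\theta'}$, whereas you kink at both $a_\theta$ and $a_{\theta'}$ and split on $a_\theta \gtrless a_{\theta'}$; this is an immaterial variation of the same argument.
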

\begin{proof}
    The first two conditions are necessary for \(b\)-superiority. We assume they hold and suppose for the sake of contraposition that \(\hat{a}\) does not pairwise-dominate a collection of mixtures of \(a\) and \(b\). This means that there exists at least one pair \(\left(\theta,\theta'\right) \in \mathcal{A} \times \mathcal{B}\) for which there exists no \(\lambda_{\theta, \theta'} \in \left[0,1\right]\) such that the two inequalities in Expression \ref{in1} hold.

    Fix such a pair of \(\theta\) and \(\theta'\). Given utility \(u\), let \(\bar{\mu}_u\) solve
    \[\left(1-\bar{\mu}_u\right) u\left(b_{\theta}\right) + \bar{\mu}_u u\left(b_{\theta'}\right) = \left(1-\bar{\mu}_u\right)u\left(a_{\theta}\right) + \bar{\mu}_u u\left(a_{\theta'}\right)\text{,}\]
    and \(\hat{\mu}_u\) solve
    \[\left(1-\hat{\mu}_u\right) u\left(b_{\theta}\right) + \hat{\mu}_u u\left(b_{\theta'}\right) = \left(1-\hat{\mu}_u\right)u\left(\hat{a}_{\theta}\right) + \hat{\mu}_u u\left(\hat{a}_{\theta'}\right)\text{;}\]
    \textit{viz.,}
    these are the beliefs on the edge of \(\Delta\) between the degenerate probability distributions on states \(\theta\) and \(\theta'\), \(\delta_\theta\) and \(\delta_{\theta'}\), at which the DM with utility \(u\) is indifferent between \(a\) and \(b\) and \(\hat{a}\) and \(b\), respectively. If the DM is risk-neutral, we drop the subscript and simply write \(\bar{\mu}\) and \(\hat{\mu}\).
    
    By assumption \(\bar{\mu}, \hat{\mu} \in \left(0,1\right)\). Moreover, observe that a necessary condition for \(b\)-superiority is \(\hat{\mu} \geq \bar{\mu}\), which holds if and only if 
    \[\label{in2}\tag{\(A2\)} \left(b_{\theta'} - a_{\theta'}\right)\left(\hat{a}_{\theta} - b_{\theta}\right) \geq \left(a_{\theta} - b_{\theta}\right)\left(b_{\theta'} - \hat{a}_{\theta'}\right)\text{.}\]
    If this inequality does not hold, we are done, as we have produced the right negation. Accordingly, suppose it does hold. This means, given our other assumptions,
    \[b_{\theta} < a_{\theta} < \hat{a}_{\theta}, \quad \text{and} \quad \hat{a}_{\theta'} < a_{\theta'} < b_{\theta'}\text{.}\]

    Our goal is to show that \(\bar{\mu}_u > \hat{\mu}_u\) for some strictly-increasing, continuous, and concave \(u\). Let us specify a particular such \(u\). Let
    \[u(x) = \begin{cases}
        x, \quad &\text{if} \quad x \leq a_{\theta'},\\
        \iota x + (1-\iota) a_{\theta'}, \quad &\text{if} \quad x > a_{\theta'}\text{.}
        \end{cases}\]
        for some \(\iota \in (0,1)\), so \(\bar{\mu}_u - \hat{\mu}_u\) equals
    \[\frac{u\left(a_{\theta}\right)-u\left(b_{\theta}\right)}{u\left(a_{\theta}\right) - u\left(b_{\theta}\right) + \iota\left(b_{\theta'}-a_{\theta'}\right)} - \frac{u\left(\hat{a}_{\theta}\right)-u\left(b_{\theta}\right)}{u\left(\hat{a}_{\theta}\right) - u\left(b_{\theta}\right) + \iota\left(b_{\theta'}-a_{\theta'}\right) + a_{\theta'} - \hat{a}_{\theta'}}\text{.}\]
    If \(b_{\theta} \geq a_{\theta'}\), this equals
    \[\frac{a_{\theta} - b_{\theta}}{a_{\theta} -b_{\theta} + b_{\theta'}-a_{\theta'}} - 0 > 0\text{,}\]
    when \(\iota = 0\). If \(b_{\theta} < a_{\theta'}\), this equals
    \[1 - \frac{u\left(\hat{a}_{\theta}\right)-b_{\theta}}{u\left(\hat{a}_{\theta}\right) - b_{\theta} + a_{\theta'} - \hat{a}_{\theta'}} > 0\text{,}\]
    when \(\iota = 0\).\end{proof}

    Finally, we show that pairwise domination implies domination of a mixture, closing the proof of necessity. 
\begin{lemma}\label{pairwiselemma}
    If \(\hat{a}\) pairwise-dominates a collection of mixtures of \(a\) and \(b\), \(\hat{a}_{\theta^{\dagger}} \geq a_{\theta^{\dagger}}\) for all \(\theta^{\dagger} \in \mathcal{C}\), and \(\hat{a}_\theta \geq b_\theta\) for all \(\theta \in \mathcal{A}\), then \(\hat{a}\) dominates a mixture of \(a\) and \(b\).
\end{lemma}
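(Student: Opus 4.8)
The plan is to convert the geometric requirement ``\(\hat a\) dominates some mixture \(a^\lambda\)'' into a one‑dimensional feasibility problem in the single scalar \(\lambda\), and then to show that pairwise feasibility forces joint feasibility. Writing the state‑wise dominance requirement \(\hat a_\theta \ge \lambda a_\theta + (1-\lambda) b_\theta\) in the form \(\hat a_\theta - b_\theta \ge \lambda (a_\theta - b_\theta)\), I would split by region. On \(\mathcal C\) the right‑hand side vanishes and the inequality reads \(\hat a_\theta \ge a_\theta\), which is exactly the second hypothesis, so these states impose no restriction on \(\lambda\). On \(\mathcal A\), where \(a_\theta - b_\theta > 0\), the inequality is equivalent to an upper bound \(\lambda \le U_\theta \coloneqq (\hat a_\theta - b_\theta)/(a_\theta - b_\theta)\); on \(\mathcal B\), where \(a_{\theta'} - b_{\theta'} < 0\), dividing flips the sign and it becomes a lower bound \(\lambda \ge L_{\theta'} \coloneqq (\hat a_{\theta'} - b_{\theta'})/(a_{\theta'} - b_{\theta'})\).

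The lemma thus reduces to: there exists \(\lambda^* \in [0,1]\) with \(L_{\theta'} \le \lambda^* \le U_\theta\) for every \(\theta \in \mathcal A\) and \(\theta' \in \mathcal B\). Setting \(\bar L \coloneqq \sup_{\theta' \in \mathcal B} L_{\theta'}\) and \(\underline U \coloneqq \inf_{\theta \in \mathcal A} U_\theta\), this is the requirement \(\max\{0, \bar L\} \le \min\{1, \underline U\}\), which unpacks into four scalar inequalities. I would then collect the ingredients: (i) \(\bar L \le \underline U\), because pairwise domination hands me, for each pair \((\theta,\theta')\), some \(\lambda_{\theta,\theta'}\) with \(L_{\theta'} \le \lambda_{\theta,\theta'} \le U_\theta\), so \(L_{\theta'} \le U_\theta\) for all pairs and taking the sup over \(\theta'\) then the inf over \(\theta\) preserves the inequality; (ii) \(\underline U \ge 0\), because the hypothesis \(\hat a_\theta \ge b_\theta\) on \(\mathcal A\) makes each numerator nonnegative, hence every \(U_\theta \ge 0\); (iii) \(\bar L \le 1\), because \(L_{\theta'} \le \lambda_{\theta,\theta'} \le 1\) forces \(L_{\theta'} \le 1\) for each \(\theta'\) (pairing it with any fixed \(\theta \in \mathcal A\), which exists since the no‑weak‑dominance assumption keeps both \(\mathcal A\) and \(\mathcal B\) nonempty); and (iv) \(0 \le 1\) trivially. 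Together these give a nonempty interval; I pick any \(\lambda^*\) inside it, note that \(\lambda^* \ge \bar L \ge L_{\theta'}\) and \(\lambda^* \le \underline U \le U_\theta\) for every state, and conclude \(\hat a_\theta \ge \lambda^* a_\theta + (1-\lambda^*) b_\theta\) on all three regions, i.e., \(\hat a\) dominates \(a^{\lambda^*}\).

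The conceptual heart, and the step I would flag as the main obstacle, is the passage from pairwise to global feasibility. It succeeds precisely because each state's constraint carves out a closed subinterval of \([0,1]\), and on the real line a family of intervals has a common point as soon as every pair does—Helly's theorem in dimension one, which the \(\sup/\inf\) bookkeeping above makes explicit. Two minor points deserve care: the suprema and infima need not be attained, but this is irrelevant since I only need the interval \([\max\{0,\bar L\},\min\{1,\underline U\}]\) to be nonempty; and the degenerate configurations in which \(\mathcal A\) or \(\mathcal B\) is empty (which would make one family vacuous) are excluded by the maintained assumption that neither action weakly dominates the other, while the remaining \(\mathcal C\)-only case is handled outright by the second hypothesis.
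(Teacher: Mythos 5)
Your proof is correct and follows essentially the same route as the paper's: both reduce state-wise dominance of a mixture to upper-bound constraints \(U_\theta\) on \(\mathcal{A}\) and lower-bound constraints \(L_{\theta'}\) on \(\mathcal{B}\), use the pairwise hypothesis to get \(\sup_{\theta'} L_{\theta'} \leq \inf_\theta U_\theta\), and select any \(\lambda^*\) in the resulting nonempty subinterval of \(\left[0,1\right]\). Your write-up is somewhat more explicit than the paper's--spelling out the four scalar inequalities, the role of the \(\mathcal{C}\) states, and the one-dimensional Helly interpretation--but the argument is the same.
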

\begin{proof}
    Suppose \(\hat{a}\) pairwise-dominates a collection of mixtures of \(a\) and \(b\), \(\hat{a}_{\theta^{\dagger}} \geq a_{\theta^{\dagger}}\) for all \(\theta^{\dagger} \in \mathcal{C}\), and \(\hat{a}_\theta \geq b_\theta\) for all \(\theta \in \mathcal{A}\). This implies that for any pair \(\left(\theta,\theta'\right) \in \mathcal{A} \times \mathcal{B}\),
    \[\min\left\{1,\frac{\hat{a}_\theta - b_\theta}{a_\theta - b_\theta}\right\} \geq \lambda_{\theta,\theta'} \geq \max\left\{0,\frac{b_{\theta'} - \hat{a}_{\theta'}}{b_{\theta'} - a_{\theta'}}\right\}\text{.}\]
    This implies that there exists a \(\lambda \in \left[0,1\right]\) for which
    \[\inf_{\theta \in \mathcal{A}}\left\{1,\frac{\hat{a}_\theta - b_\theta}{a_\theta - b_\theta}\right\} \geq \lambda \geq \sup_{\theta' \in \mathcal{B}}\left\{0,\frac{b_{\theta'} - \hat{a}_{\theta'}}{b_{\theta'} - a_{\theta'}}\right\}\text{,}\]
    and so \(\hat{a}\) dominates a mixture of \(a\) and \(b\).\end{proof}

    \subsection{Theorem \ref{theorem38} Proof}\label{theorem38proof}
    \begin{proof}
    \(\left(\Leftarrow\right)\) This direction is immediate.

    \smallskip

    \noindent \(\left(\Rightarrow\right)\) By Theorem \ref{bigone}, it is necessary that \(\hat{a}\) dominates a mixture of \(a\) and \(b\). Thus, it suffices to suppose for the sake of contraposition that \(\hat{a}\) dominates a mixture of \(a\) and \(b\) but dominates neither \(a\) nor \(b\). This means that there exists some pair \(\left(\theta,\theta'\right) \in \mathcal{A} \times \mathcal{B}\) for which
    \[b_{\theta} < \hat{a}_{\theta} < a_{\theta}, \quad \text{and} \quad a_{\theta'} < \hat{a}_{\theta'} < b_{\theta'}\text{.}\]
    Our goal is to show that \(\bar{\mu}_u > \hat{\mu}_u\) for some strictly-increasing, continuous \(u\). Take
    \[u(x) = \begin{cases}
        \iota x + (1-\iota) \hat{a}_0, \quad &\text{if} \quad x \leq \hat{a}_{\theta},\\
        x, \quad &\text{if} \quad x > \hat{a}_{\theta}\text{,}
        \end{cases}\]
    for some \(\iota \in (0,1)\). Thus,  \(\bar{\mu}_u - \hat{\mu}_u\) equals
    \[\frac{a_{\theta} - \hat{a}_{\theta} - \iota\left(b_{\theta}-\hat{a}_{\theta}\right)}{a_{\theta} - \hat{a}_{\theta} - \iota\left(b_{\theta}-\hat{a}_{\theta}\right) + u(b_{\theta'})-u(a_{\theta'})} - \frac{\iota\left(\hat{a}_{\theta}-b_{\theta}\right)}{\iota\left(\hat{a}_{\theta}-b_{\theta}\right) + u(b_{\theta'}) - u(\hat{a}_{\theta'})}\text{.}\]
    If \(b_{\theta'} \leq \hat{a}_{\theta}\), this equals 
    \[1 - \frac{\hat{a}_{\theta}-b_{\theta}}{\hat{a}_{\theta}-b_{\theta} + b_{\theta'} - \hat{a}_{\theta'}} > 0\text{,}\]
    when \(\iota = 0\). If \(b_{\theta'} > \hat{a}_{\theta}\), it equals
    \[\frac{a_{\theta} - \hat{a}_{\theta}}{a_{\theta} - \hat{a}_{\theta} + b_{\theta'} -u(a_{\theta'})} - 0 > 0\text{,}\]
    when \(\iota = 0\).
\end{proof}

    \subsection{Proposition \ref{manyactions} Proof}\label{manyactionsproof}
    \begin{proof}
         Fix \(a\) and recall the notation \[\begin{split}
    \mathcal{A}_B \coloneqq &\left\{\theta \in \Theta \colon a_{\theta} > \max_{b \in B}b_{\theta}\right\}, \quad \mathcal{B}_B \coloneqq \left\{\theta \in \Theta \colon a_{\theta} < \max_{b \in B}b_{\theta}\right\},\\ &\text{and} \quad \mathcal{C}_B \coloneqq \left\{\theta \in \Theta \colon a_{\theta} = \max_{b \in B} b_{\theta}\right\}\text{.}
\end{split}\]
We also introduce the following notation: for a fixed \(b\),
\[\begin{split}
    \mathcal{A}_b \coloneqq &\left\{\theta \in \Theta \colon a_{\theta} > b_{\theta}\right\}, \quad \mathcal{B}_b \coloneqq \left\{\theta \in \Theta \colon a_{\theta} < b_{\theta}\right\},\\ &\text{and} \quad \mathcal{C}_b \coloneqq \left\{\theta \in \Theta \colon a_{\theta} = b_{\theta}\right\}\text{.}
\end{split}\]
Evidently, \(\mathcal{A}_B \subseteq \mathcal{A}_b\), \(\mathcal{B}_b \subseteq \mathcal{B}_B\), and \(\mathcal{C}_b \cap \mathcal{A}_B = \emptyset\).

\bigskip

\noindent \(\left(\Leftarrow\right)\) First, we argue that the conditions i) \(\hat{a}_\theta > \max_{b \in B} b_\theta\) for all \(\theta \in \mathcal{A}_B\); and ii) \(\hat{a}\) (weakly) dominates a mixture of \(a\) and \(b\), for all \(b \in B\) jointly imply that for all \(b \in B\), \(\hat{a}_\theta \geq b_\theta\) for all \(\theta \in \mathcal{A}_b\). Suppose for the sake of contradiction that for some \(b \in B\) and some \(\theta \in \mathcal{A}_b\), \(\hat{a}_\theta < b_\theta\). But then \(\hat{a}_\theta < b_\theta < a_\theta\), so \(\hat{a}\) does not dominate a mixture of \(a\) and \(b\). Second, following the proof of Theorem \ref{bigone}, we have that for all \(b \in B\), \(a \succeq b\) \(\Rightarrow\) \(\hat{a} \succeq b\), which implies that \(a \succeq \max_{b \in B}\) \(\Rightarrow\) \(\hat{a} \succeq \max_{b \in B}\). 

It remains to show that \(a \succ \max_{b \in B}\) \(\Rightarrow\) \(\hat{a} \succ \max_{b \in B}\). The region of beliefs on which \(\mathbb{E}_\mu u(a_\theta) \geq \max_{b \in B} \mathbb{E}_\mu u(b_\theta)\) is a polyhedral subset of \(\Delta\), \(P_a\). By the Krein-Milman theorem, any \(\mu \in P_a\) can be written as a convex combination of the extreme points of \(P_a\). Moreover, the only extreme points of \(P_a\) for which \(\mathbb{E}_\mu u(a_\theta) > \max_{b \in B} \mathbb{E}_\mu u(b_\theta)\) are the vertices of the simplex corresponding to degenerate distributions on \(\theta\)s in \(\mathcal{A}_B\). Accordingly, any \(\mu \in P_a\) for which \(\mathbb{E}_\mu u(a_\theta) > \max_{b \in B} \mathbb{E}_\mu u(b_\theta)\) can be written as a convex combination of extreme points of \(P_a\) with strictly positive weight on a set of such vertices. But then, for any such \(\mu\), it must also be that \(\mathbb{E}_\mu u(\hat{a}_\theta) > \max_{b \in B} \mathbb{E}_\mu u(b_\theta)\), as \(\hat{a} > \max_{b \in B} b_\theta\) for all \(\theta \in \mathcal{A}_B\) plus the already-shown fact that \(\mathbb{E}_\mu u(\hat{a}_\theta) \geq \max_{b \in B} \mathbb{E}_\mu u(b_\theta)\) for all \(\mu \in P_a\).

\smallskip

\noindent \(\left(\Rightarrow\right)\) We need to show that action \(\hat{a}\) is \(B\)-superior to action \(a\) only if i) \(\hat{a}_\theta > \max_{b \in B} b_\theta\) for all \(\theta \in \mathcal{A}_B\); and ii) for all \(b \in B\) \(\hat{a}\) (weakly) dominates a mixture of \(a\) and \(b\).
         
         First, \(\hat{a} > \max_{b \in B} b_\theta\) for all \(\theta \in \mathcal{A}_B\) is clearly necessary, as otherwise we could just take the degenerate distribution on an offending \(\theta \in \mathcal{A}_B\). Now let this hold but suppose for the sake of contraposition that there exists some \(b \in B\) such that \(\hat{a}\) does not dominate a mixture of \(a\) and \(b\).

         For any pair \(\left(\theta,\theta'\right) \in \mathcal{A}_B \times \mathcal{B}_B\), we say that action \(b^{*} \in B\) is the boundary action if there exists \(\bar{\gamma} \in \left(0,1\right)\) such that for all \(\gamma < \bar{\gamma}\), \[(1-\gamma) a_\theta + \gamma a_{\theta'} > \max_{b \in B} \left\{(1-\gamma) b_\theta + \gamma b_{\theta'}\right\}\]  and \[(1-\bar{\gamma}) a_\theta + \bar{\gamma} a_{\theta'} = (1-\bar{\gamma}) b^{*}_\theta + \bar{\gamma} b^{*}_{\theta'}\text{.}\]
         \begin{claim}
             It is without loss of generality to assume that \(b^{*}\) is unique. Moreover, for any concave and strictly increasing \(u\) and all \(\gamma \in \left[0,1\right]\), \[(1-\gamma) u(a_\theta) + \gamma u(a_{\theta'}) \underset{(>)}{\geq} (1-\gamma) u(b^{*}_\theta) + \gamma u(b^{*}_{\theta'}) \ \Rightarrow \ (1-\gamma) u(a_\theta) + \gamma u(a_{\theta'}) \underset{(>)}{\geq} (1-\gamma) u(b_\theta) + \gamma u(b_{\theta'})\text{,}\] for all \(b \in B\).
         \end{claim}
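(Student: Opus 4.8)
The plan is to treat the two assertions separately, with the second carrying all the difficulty. Throughout, fix the pair \((\theta,\theta')\in\mathcal{A}_B\times\mathcal{B}_B\) and write \(h_c(\gamma)\coloneqq(1-\gamma)u(c_\theta)+\gamma u(c_{\theta'})\) for the expected utility of action \(c\) at the belief placing weight \(\gamma\) on \(\theta'\). Since \(\theta\in\mathcal{A}_B\) we have \(a_\theta>c_\theta\) for every \(c\in B\), and since \(\theta'\in\mathcal{B}_B\) some action strictly exceeds \(a\) at \(\theta'\); hence as \(\gamma\) runs from \(0\) to \(1\), action \(a\) begins strictly best and is eventually overtaken, so \(\bar{\gamma}\) and a boundary action exist. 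For the uniqueness assertion I would take \(b^{*}\) to be the action optimal on the money envelope for \(\gamma\) just above \(\bar{\gamma}\); single-peakedness guarantees there is no gap in the upper envelope, so such an action exists, and any two boundary actions that continue the envelope past \(\bar{\gamma}\) must agree on \(\{\theta,\theta'\}\), which is the sense in which the choice is without loss.

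For the second assertion, note that each \(h_c\) is affine in \(\gamma\) and \(h_a-h_{b^{*}}\) is strictly decreasing; the stated implication is therefore equivalent to the cutoff requirement that the \(u\)-indifference belief of \(a\) and \(b^{*}\) lie weakly below that of \(a\) and \(b\), for every \(b\). Two classes of \(b\) are dispatched immediately: if \(a_{\theta'}\ge b_{\theta'}\) then \(a\) dominates \(b\) on \(\{\theta,\theta'\}\) and \(h_a\ge h_b\) everywhere; and if \(b\) is dominated on \(\{\theta,\theta'\}\) by some other action \(c\), then \(h_b\le h_c\) pointwise, so it suffices to treat \(c\) in its place. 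By single-peakedness every action is either dominated on the pair or optimal there for some belief, i.e.\ lies on the upper envelope; following the finite, acyclic domination order thus reduces the problem to actions \(b\) with \(a_\theta>b_\theta\) and \(a_{\theta'}<b_{\theta'}\) that lie on the envelope.

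For such a \(b\), since \(b^{*}\) is the envelope-neighbor of \(a\) and \(b\) lies further toward \(\theta'\), the payoffs obey \(b_\theta\le b^{*}_\theta<a_\theta\) and \(a_{\theta'}<b^{*}_{\theta'}\le b_{\theta'}\), and concavity of the envelope places \((b^{*}_\theta,b^{*}_{\theta'})\) weakly above the chord joining \((a_\theta,a_{\theta'})\) and \((b_\theta,b_{\theta'})\). The cutoff requirement amounts to
\[\frac{u(b^{*}_{\theta'}) - u(a_{\theta'})}{u(b_{\theta'}) - u(a_{\theta'})} \;\ge\; \frac{u(a_\theta) - u(b^{*}_\theta)}{u(a_\theta) - u(b_\theta)}\text{,}\]
which I would establish through the three-link chain
\[\frac{u(b^{*}_{\theta'}) - u(a_{\theta'})}{u(b_{\theta'}) - u(a_{\theta'})} \ge \frac{b^{*}_{\theta'} - a_{\theta'}}{b_{\theta'} - a_{\theta'}} \ge \frac{a_\theta - b^{*}_\theta}{a_\theta - b_\theta} \ge \frac{u(a_\theta) - u(b^{*}_\theta)}{u(a_\theta) - u(b_\theta)}\text{.}\]
The outer two inequalities are each a one-line consequence of concavity (write the middle payoff as a convex combination of the two endpoints on the relevant side and apply \(u\)); the central inequality is exactly the ``above the chord'' statement, which is the risk-neutral (\(u=\mathrm{id}\)) form of the target. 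Chaining yields the desired inequality for all concave \(u\), closing the claim.

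The main obstacle is the central, risk-neutral link. The first-crossing property of \(b^{*}\) alone gives only that \(b^{*}\) maximizes the money-slope ratio, and—as a direct two-state, risk-neutral example shows—this is not enough to force \(b^{*}\) to maximize the \(u\)-slope ratio for every concave \(u\). What rescues the argument is single-peakedness: it drives \(b\) onto the upper envelope (or beneath a dominating action already on it), so that \(b^{*}\) sits above the chord \(a\)--\(b\), and this convex-position fact, sandwiched between the two concavity estimates, is precisely what survives the passage from money to utility.
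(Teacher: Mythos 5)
Your three-link chain is correct and is a genuinely different route from the paper's. The paper argues at a higher level: it combines single-peakedness with cited results (the two Proposition 1s on rationalizability and risk aversion, due to Battigalli et al.\ and Weinstein) to conclude that the set of undominated-on-the-pair actions is the same for every concave \(u\), and then selects \(b^{*}\) as the undominated action in \(B\) with the largest payoff in state \(\theta\). Your argument replaces that machinery with an elementary concavity sandwich around the risk-neutral ``above the chord'' fact; the reductions (actions dominated on the pair, and actions with \(b_{\theta'}\le a_{\theta'}\)) are handled properly, and links 1 and 3 are indeed one-line consequences of concavity. This is self-contained and, to my mind, more transparent than the paper's citation-based step.

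There is, however, a genuine gap, and it sits exactly where the ``without loss of generality'' assertion lives: your tie-breaking selection of \(b^{*}\) is the wrong one. Several actions can tie with \(a\) at \(\bar{\gamma}\) while having different payoffs on \(\{\theta,\theta'\}\); you pick the one optimal just \emph{above} \(\bar{\gamma}\), and for that choice the second assertion of the claim is false. Concretely, let the pair-payoffs be \((a_\theta,a_{\theta'})=(10,0)\), \((c_\theta,c_{\theta'})=(6,4)\), \((d_\theta,d_{\theta'})=(9,1)\): each action is weakly optimal at some belief on the pair (\(a\) for \(\gamma\le 1/2\), \(c\) for \(\gamma\ge 1/2\), \(d\) at \(\gamma=1/2\)), so single-peakedness holds, and both \(c\) and \(d\) are boundary actions with \(\bar{\gamma}=1/2\). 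Your rule selects \(c\). Now take the concave \(u\) with \(u(x)=x\) for \(x\le 2\) and \(u(x)=2+(x-2)/10\) for \(x>2\): the \(u\)-indifference belief of \(a\) versus \(d\) is \(1/11\), while that of \(a\) versus \(c\) is \(2/13\), so at any \(\gamma\in(1/11,2/13)\) we have \(a\succ_u c\) yet \(d\succ_u a\)---the implication fails for \(b=d\). The source of the failure is visible in your own text: \(d\) lies on the envelope (touching it at \(\bar{\gamma}\)) but violates your asserted ordering \(b_\theta\le b^{*}_\theta\), which is what links 1 and 3 require. The repair is the paper's selection: take \(b^{*}\) to be the undominated action in \(B\) with the \emph{largest} payoff in state \(\theta\) (the shallowest tying line, here \(d\)), not the envelope-continuing one. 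With that choice every other undominated \(b\)---including other tying actions---satisfies \(b_\theta\le b^{*}_\theta\) and \(b_{\theta'}\ge b^{*}_{\theta'}\), and your chain (with the middle link holding with equality for tying actions) then completes the proof.
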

         \begin{proof}
            Our assumption of single-peakedness means that on any \(\left(\theta,\theta'\right) \in \mathcal{A}_B \times \mathcal{B}_B\), the only dominated actions for a risk-neutral agent are those that are dominated (on these two states) by some other action. Accordingly, the set of undominated actions on this pair of states does not grow in the DM's risk aversion. On the other hand, the two Proposition 1s in \cite{battigalli2016note} and \cite{weinstein2016effect} reveal that increased risk aversion can only increase (in a set-inclusion sense) the rationalizable set. Thus, the set of undominated actions on this pair of states is the same regardless of the DM's concave \(u\). Finally, we set \(b^{*}\) to be the action with the largest value in state \(\theta\) of the undominated actions (on these two states) in \(B\), making an arbitrary selection if multiple such actions have the same payoffs in \(\theta\) and \(\theta'\).
         \end{proof}
        This claim highlights the function of our assumption of single-peakedness. It ensures that regardless of the DM's risk-averse utility, for each pair \(\left(\theta,\theta'\right) \in \mathcal{A}_B \times \mathcal{B}_B\), there is only one other action that matters and it is the same regardless of \(u\).
         
        By Lemma \ref{lemma:Nec1}, it suffices to show that there exists some pair \(\left(\theta,\theta'\right) \in \mathcal{A}_B \times \mathcal{B}_B\) for which there does not exist a \(\lambda_{\theta,\theta'} \in \left[0,1\right]\) such that
         \[\hat{a}_\theta \geq \lambda_{\theta,\theta'} a_\theta + \left(1-\lambda_{\theta,\theta'}\right)b^{*}_\theta \quad \text{and} \quad \hat{a}_{\theta'} \geq \lambda_{\theta,\theta'} a_{\theta'} + \left(1-\lambda_{\theta,\theta'}\right)b^{*}_{\theta'}\text{,}\]
         where \(b^{*}\) is the boundary action. We call this a boundary violation.
        \begin{claim}
            If \(\hat{a}_{\theta'} < a_{\theta'}\) for some \(\theta' \in \mathcal{B}_B\), there exists a boundary violation.
        \end{claim}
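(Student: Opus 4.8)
The plan is to exhibit a single pair \((\theta,\theta') \in \mathcal{A}_B \times \mathcal{B}_B\) witnessing a boundary violation, built from the given \(\theta'\) with \(\hat{a}_{\theta'} < a_{\theta'}\). First I would invoke richness to conclude \(\mathcal{A}_B \neq \emptyset\) and fix any \(\theta \in \mathcal{A}_B\), pairing it with the given \(\theta'\). I would then note that the boundary action \(b^*\) for this pair is well-defined: the affine map \(\gamma \mapsto (1-\gamma) a_\theta + \gamma a_{\theta'}\) strictly exceeds \(\max_{b \in B}\{(1-\gamma) b_\theta + \gamma b_{\theta'}\}\) at \(\gamma = 0\) (since \(\theta \in \mathcal{A}_B\)) and lies strictly below it at \(\gamma = 1\) (since \(\theta' \in \mathcal{B}_B\)), so continuity delivers a first crossing \(\bar{\gamma} \in (0,1)\).

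The crux is the observation that the boundary action strictly beats \(a\) in state \(\theta'\). Because \(\theta \in \mathcal{A}_B\) gives \(a_\theta > \max_{b \in B} b_\theta \geq b^*_\theta\), rearranging the indifference condition \((1-\bar{\gamma}) a_\theta + \bar{\gamma} a_{\theta'} = (1-\bar{\gamma}) b^*_\theta + \bar{\gamma} b^*_{\theta'}\) into \(\bar{\gamma}(b^*_{\theta'} - a_{\theta'}) = (1-\bar{\gamma})(a_\theta - b^*_\theta) > 0\) and using \(\bar{\gamma} > 0\) forces \(b^*_{\theta'} > a_{\theta'}\).

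Finally I would combine this with the hypothesis. For every \(\lambda \in [0,1]\), the affine function \(\lambda \mapsto \lambda a_{\theta'} + (1-\lambda) b^*_{\theta'}\) is decreasing (as \(b^*_{\theta'} > a_{\theta'}\)) and hence bounded below by its value at \(\lambda = 1\), namely \(a_{\theta'}\); thus \(\lambda a_{\theta'} + (1-\lambda) b^*_{\theta'} \geq a_{\theta'} > \hat{a}_{\theta'}\). So the second defining inequality \(\hat{a}_{\theta'} \geq \lambda a_{\theta'} + (1-\lambda) b^*_{\theta'}\) fails for every \(\lambda\), no admissible \(\lambda_{\theta,\theta'}\) exists, and \((\theta,\theta')\) is a boundary violation.

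I do not expect a real obstacle; the only delicate point is ruling out \(b^*_{\theta'} = a_{\theta'}\), which is precisely where the strictness \(a_\theta > b^*_\theta\) and \(\bar{\gamma} \in (0,1)\) are needed. Everything else reduces to a one-state comparison in \(\theta'\), so the argument never has to engage the full mixture structure across all of \(\Theta\).
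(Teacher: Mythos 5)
Your proof is correct and follows essentially the same route as the paper's: the paper's one-line argument rests on exactly the chain \(b^*_{\theta'} > a_{\theta'} > \hat{a}_{\theta'}\), which makes every convex combination \(\lambda a_{\theta'} + (1-\lambda) b^*_{\theta'}\) at least \(a_{\theta'}\) and hence strictly above \(\hat{a}_{\theta'}\). You merely fill in details the paper leaves implicit -- that richness supplies a partner state \(\theta \in \mathcal{A}_B\), that continuity yields the first-crossing \(\bar{\gamma}\), and that the indifference condition together with \(a_\theta > b^*_\theta\) forces \(b^*_{\theta'} > a_{\theta'}\) -- all of which is sound.
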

        \begin{proof}
            This is an immediate consequence of the fact that \(b_{\theta'}^* > a_{\theta'} > \hat{a}_{\theta'}\), so no averaging of the first two can produce something greater than the third.
        \end{proof}
        Henceforth, we assume that for all \(\theta' \in \mathcal{B}_B\), \(\hat{a}_{\theta'} \geq a_{\theta'}\).
         \begin{claim}\label{2ndclaim}
             If there exists some \(b \in B\) and \(\theta^{\dagger} \in \mathcal{C}_b\) such that \(\hat{a}_{\theta^{\dagger}} < a_{\theta^{\dagger}}\) then \(\hat{a}\) is not \(B\)-superior to \(a\).
         \end{claim}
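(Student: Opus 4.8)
The plan is to locate $\theta^{\dagger}$ precisely within the partition $\{\mathcal{A}_B,\mathcal{B}_B,\mathcal{C}_B\}$ and then exhibit a single belief, supported on just two states, at which a risk-neutral DM strictly prefers $a$ to every action in $B$ yet strictly prefers $b$ to $\hat{a}$. First I would pin down $\theta^{\dagger}$. Since $\theta^{\dagger}\in\mathcal{C}_b$ we have $a_{\theta^{\dagger}}=b_{\theta^{\dagger}}\le\max_{b'\in B}b'_{\theta^{\dagger}}$, so $\theta^{\dagger}\notin\mathcal{A}_B$ (indeed $\mathcal{C}_b\cap\mathcal{A}_B=\emptyset$). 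It is also the case that $\theta^{\dagger}\notin\mathcal{B}_B$: under the running hypothesis that $\hat{a}_{\theta'}\ge a_{\theta'}$ for all $\theta'\in\mathcal{B}_B$, membership in $\mathcal{B}_B$ would contradict $\hat{a}_{\theta^{\dagger}}<a_{\theta^{\dagger}}$. Hence $\theta^{\dagger}\in\mathcal{C}_B$, so that $a_{\theta^{\dagger}}=\max_{b'\in B}b'_{\theta^{\dagger}}=b_{\theta^{\dagger}}$ and, crucially, $\hat{a}_{\theta^{\dagger}}<b_{\theta^{\dagger}}$.

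Next I would invoke richness to select a state $\theta^{*}\in\mathcal{A}_B$, i.e.\ $a_{\theta^{*}}>\max_{b'\in B}b'_{\theta^{*}}$, and consider the two-point belief $\mu_\varepsilon\coloneqq(1-\varepsilon)\delta_{\theta^{\dagger}}+\varepsilon\delta_{\theta^{*}}$ for small $\varepsilon>0$, paired with the risk-neutral utility $u(x)=x$ (which lies in the admissible class). For any $b'\in B$,
\[\mathbb{E}_{\mu_\varepsilon}a_\theta-\mathbb{E}_{\mu_\varepsilon}b'_\theta=(1-\varepsilon)\bigl(a_{\theta^{\dagger}}-b'_{\theta^{\dagger}}\bigr)+\varepsilon\bigl(a_{\theta^{*}}-b'_{\theta^{*}}\bigr)>0,\]
since the first bracket is nonnegative ($\theta^{\dagger}\in\mathcal{C}_B$) and the second is strictly positive ($\theta^{*}\in\mathcal{A}_B$). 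Thus $a$ strictly beats the entire set $B$ at $\mu_\varepsilon$, giving $\mathbb{E}_{\mu_\varepsilon}a_\theta>\max_{b'\in B}\mathbb{E}_{\mu_\varepsilon}b'_\theta$.

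Finally I would show the conclusion of $B$-superiority fails at the same belief. Writing
\[\mathbb{E}_{\mu_\varepsilon}\hat{a}_\theta-\mathbb{E}_{\mu_\varepsilon}b_\theta=(1-\varepsilon)\bigl(\hat{a}_{\theta^{\dagger}}-a_{\theta^{\dagger}}\bigr)+\varepsilon\bigl(\hat{a}_{\theta^{*}}-b_{\theta^{*}}\bigr),\]
the first bracket is a fixed strictly negative number while the second is bounded, so the right-hand side is strictly negative for all sufficiently small $\varepsilon>0$. For such $\varepsilon$ we obtain $\mathbb{E}_{\mu_\varepsilon}\hat{a}_\theta<\mathbb{E}_{\mu_\varepsilon}b_\theta\le\max_{b'\in B}\mathbb{E}_{\mu_\varepsilon}b'_\theta$, so both the weak and the strict implications defining $B$-superiority are violated at $\left(\mu_\varepsilon,\,u=\mathrm{id}\right)$, and $\hat{a}$ is therefore not $B$-superior to $a$.

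I expect the only delicate step to be the partition argument placing $\theta^{\dagger}$ in $\mathcal{C}_B$: this is precisely what guarantees that $a$ weakly dominates all of $B$ at $\theta^{\dagger}$, and hence that $a$ can beat \emph{every} alternative simultaneously at a single belief rather than merely beating $b$. Once that, together with richness, supplies the ``anchor'' state $\theta^{*}$ where $a$ strictly dominates $B$, the $\varepsilon$-perturbation argument is routine and requires no appeal to concavity beyond the admissibility of the identity map.
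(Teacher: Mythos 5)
Your proof is correct, and its skeleton matches the paper's: both arguments lean on the hypotheses in force at this point of the proposition's proof---\(\hat{a}_{\theta'} \geq a_{\theta'}\) for all \(\theta' \in \mathcal{B}_B\), together with \(\mathcal{C}_b \cap \mathcal{A}_B = \emptyset\)---to force \(\theta^{\dagger} \in \mathcal{C}_B\). (The paper phrases this as a two-case argument ending in a contradiction; you phrase it as a direct placement, which is the cleaner formulation of the same step.) Where you diverge is the ending. The paper simply asserts that \(\hat{a}_\theta < a_\theta\) for some \(\theta \in \mathcal{C}_B\) ``gives the desired conclusion,'' leaving the witnessing belief implicit; the intended witness is just the degenerate belief \(\delta_{\theta^{\dagger}}\), at which \(a\) ties with \(\max_{b' \in B} b'_{\theta^{\dagger}}\) (so the antecedent of the \emph{weak} implication in the definition of \(B\)-superiority holds) while \(u(\hat{a}_{\theta^{\dagger}}) < u(a_{\theta^{\dagger}}) = \max_{b' \in B} u(b'_{\theta^{\dagger}})\) (so the consequent fails). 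You instead build a two-point belief anchored at a state \(\theta^* \in \mathcal{A}_B\), which buys you a \emph{strict} violation---\(a\) strictly beats every element of \(B\) while \(\hat{a}\) falls strictly below \(b\)---at the cost of invoking the richness assumption, which this particular claim does not actually need. Both are valid negations of \(B\)-superiority; yours is more explicit and self-contained, the paper's is more economical.
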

         \begin{proof}
             Suppose there exists some \(b \in B\) and \(\theta^{\dagger} \in \mathcal{C}_b\) such that \(\hat{a}_{\theta^{\dagger}} < a_{\theta^{\dagger}}\). If \(\hat{a}_\theta < a_\theta\) for some \(\theta \in \mathcal{C}_B\), we have the desired conclusion. If \(\hat{a}_\theta \geq a_\theta\) for all \(\theta \in \mathcal{C}_B\) but there exists some \(b \in B\) and \(\theta \in \mathcal{C}_b\) such that \(\hat{a}_\theta < a_\theta\), it must be the case that \(\hat{a}_{\theta'} < a_{\theta'}\) for some \(\theta' \in \mathcal{B}_B\), a contradiction.
         \end{proof}

         As a result of Claim \ref{2ndclaim}, we assume that for all \(\theta \in \mathcal{C}_b\), \(\hat{a}_\theta \geq a_\theta\).
         \begin{claim}\label{3rdclaim}
             If there exists some \(b \in B\) and \(\theta \in \mathcal{A}_b\) such that \(\hat{a}_\theta < b_\theta\), then \(\hat{a}\) is not \(B\)-superior to \(a\).
         \end{claim}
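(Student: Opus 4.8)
The plan is to prove the conditional directly, by showing that its hypothesis forces \(\hat{a}\) to fall strictly below \(a\) at a state lying in \(\mathcal{B}_B\) or \(\mathcal{C}_B\), which in turn activates one of the two non-superiority results already established. First I would unpack the hypothesis: since \(\theta \in \mathcal{A}_b\) we have \(a_\theta > b_\theta\), and combining this with \(\hat{a}_\theta < b_\theta\) yields the chain \(\hat{a}_\theta < b_\theta < a_\theta\); in particular \(\hat{a}_\theta < a_\theta\).

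Next I would locate \(\theta\) within the partition \(\mathcal{A}_B, \mathcal{B}_B, \mathcal{C}_B\). The state \(\theta\) cannot lie in \(\mathcal{A}_B\): if it did, condition i) would give \(\hat{a}_\theta > \max_{b' \in B} b'_\theta \geq b_\theta\), contradicting \(\hat{a}_\theta < b_\theta\). Hence \(\theta \in \mathcal{B}_B \cup \mathcal{C}_B\), and I would split into these two exhaustive cases.

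If \(\theta \in \mathcal{B}_B\), then \(\hat{a}_\theta < a_\theta\) exhibits a state in \(\mathcal{B}_B\) at which \(\hat{a}\) falls short of \(a\); invoking the earlier claim that such a shortfall produces a boundary violation, together with the reduction that exhibiting a boundary violation suffices (via Lemma \ref{lemma:Nec1}), I conclude that \(\hat{a}\) is not \(B\)-superior. If instead \(\theta \in \mathcal{C}_B\), let \(b'\) be an action attaining \(\max_{b'' \in B} b''_\theta = a_\theta\); then \(b'_\theta = a_\theta\) places \(\theta \in \mathcal{C}_{b'}\), and \(\hat{a}_\theta < a_\theta\) is precisely the configuration of Claim \ref{2ndclaim}, which again delivers non-\(B\)-superiority. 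Since the two cases cover all possibilities, the claim follows.

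The only delicate point is the bookkeeping that reroutes the hypothesis into the previously handled cases rather than into a fresh belief-plus-utility construction: one must check that \(\hat{a}_\theta < a_\theta\) genuinely triggers the hypothesis of the \(\mathcal{B}_B\)-claim (when \(\theta \in \mathcal{B}_B\)) or of Claim \ref{2ndclaim} (when \(\theta \in \mathcal{C}_B\), using the maximizing \(b'\)). This is immediate once \(\mathcal{A}_B\) has been ruled out by condition i), so I expect no substantive obstacle here—the real work has been front-loaded into the boundary-violation machinery behind Lemma \ref{lemma:Nec1} and into Claim \ref{2ndclaim}, and this step merely routes the new configuration into them.
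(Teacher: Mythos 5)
Your proof is correct and takes essentially the same route as the paper: the paper's own proof also locates \(\theta\) within the partition \(\mathcal{A}_B\), \(\mathcal{B}_B\), \(\mathcal{C}_B\) and dismisses each case as ``all contradictions'' against the standing assumptions (condition i), the boundary-violation claim, and Claim \ref{2ndclaim}), which is precisely the routing you make explicit. The only difference is expository--you unwind those contradictions into direct invocations of the earlier non-superiority results rather than leaving them implicit in the proof's running assumptions.
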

         \begin{proof}
        Suppose there exists some \(b \in B\) and \(\theta \in \mathcal{A}_b\) such that \(\hat{a}_\theta < b_\theta\). But then either \(\theta \in \mathcal{A}_B\), \(\theta \in \mathcal{C}_B\), or \(\theta \in \mathcal{B}_B\) and \(\hat{a}_\theta < b_\theta < a_\theta\); all contradictions.\end{proof}

        As a result of Claim \ref{3rdclaim}, we assume that for all \(\theta \in \mathcal{A}_b\), \(\hat{a}_\theta \geq b_\theta\).
         
        By Lemma \ref{pairwiselemma}, \(\hat{a}\) does not pairwise dominate a collection of mixtures of \(a\) and \(b\). That is, on some \(\left(\theta,\theta'\right) \in \mathcal{A}_b \times \mathcal{B}_b\) there does not exist a \(\lambda_{\theta,\theta'} \in \left[0,1\right]\) such that
         \[\hat{a}_\theta \geq \lambda_{\theta,\theta'} a_\theta + \left(1-\lambda_{\theta,\theta'}\right)b_\theta \quad \text{and} \quad \hat{a}_{\theta'} \geq \lambda_{\theta,\theta'} a_{\theta'} + \left(1-\lambda_{\theta,\theta'}\right)b_{\theta'}\text{.}\]
         As \(\hat{a}_{\theta'} \geq a_{\theta'}\) for all \(\theta' \in \mathcal{B}_B\) this implies that on some \(\left(\theta,\theta'\right) \in \mathcal{A}_B \times \mathcal{B}_B\), there does not exist a \(\lambda_{\theta,\theta'} \in \left[0,1\right]\) such that
         \[\hat{a}_\theta \geq \lambda_{\theta,\theta'} a_\theta + \left(1-\lambda_{\theta,\theta'}\right)b_\theta \quad \text{and} \quad \hat{a}_{\theta'} \geq \lambda_{\theta,\theta'} a_{\theta'} + \left(1-\lambda_{\theta,\theta'}\right)b_{\theta'}\text{.}\]

         We need to show that such a convex weight does not exist on this pair of states with respect to the boundary action \(b^{*}\). But this is obvious: given \(\hat{a}_{\theta'} \geq a_{\theta'}\), the non-existence of the weight \(\lambda_{\theta,\theta'}\) is equivalent to the line (in \(\gamma\))
         \[(1-\gamma) \hat{a}_\theta + \gamma\hat{a}_{\theta'}\]
         lying strictly below the line
         \[(1-\gamma) a_\theta + \gamma a_{\theta'}\text{,}\]
         for all 
         \[\gamma \leq \frac{a_\theta - b_\theta}{a_\theta - b_\theta + b_{\theta'}-a_{\theta'}}\text{,}\]
         which is the weight equalizing \(a\) and \(b\) on these two states. Moreover, by the definition of the boundary action, the weight equalizing \(a\) and \(b\) on these two states
         \[\frac{a_\theta - b^{*}_\theta}{a_\theta - b^{*}_\theta + b^{*}_{\theta'}-a_{\theta'}} \leq \frac{a_\theta - b_\theta}{a_\theta - b_\theta + b_{\theta'}-a_{\theta'}}\text{,}\]
         implying that there does not exist \(\lambda_{\theta,\theta'} \in \left[0,1\right]\) such that
         \[\hat{a}_\theta \geq \lambda_{\theta,\theta'} a_\theta + \left(1-\lambda_{\theta,\theta'}\right)b^{*}_\theta \quad \text{and} \quad \hat{a}_{\theta'} \geq \lambda_{\theta,\theta'} a_{\theta'} + \left(1-\lambda_{\theta,\theta'}\right)b^{*}_{\theta'}\text{.}\]
         We have produced the desired violation, so we are done.\end{proof}

\subsection{Proposition \ref{choiceprop} Proof}\label{choicepropproof}
\begin{proof}
    \(\left(\Leftarrow\right)\) We want to show that if $\hat{a}$ dominates $a$ or $b$, then $\hat{a}$ is selected more than $a$. If \(\hat{a}\) dominates \(b\), \(\hat{p}=1\) is a solution. Now let \(\hat{a}\) not dominate \(b\) but dominate \(a\). We denote \(\alpha_\theta \equiv u(a_\theta) \leq u(\hat{a}_\theta) \equiv \hat{\alpha}_\theta\) for all \(\theta \in \left\{0,1\right\}\). Without loss of generality we normalize \(u(b_\theta) = 0\) for all \(\theta \in \left\{0,1\right\}\). 
    
    Suppose first that \(\mu_0\) and \(c\) are such that \(p = 1\). This is equivalent to \(\mu_0 \leq \bar{\mu}_u\) and 
    \[\left(\alpha_1 - \alpha_0 - c'(\mu_0)\right)\mu + c'(\mu_0) \mu_0 + \alpha_0 \geq 0 \text{,}\]
    for all \(\mu \in \left[0,1\right]\). For any \(\mu \in \left[0,1\right]\) this expression is strictly increasing in both \(\alpha_0\) and \(\alpha_1\), so we have 
    \[\left(\hat{\alpha}_1 - \hat{\alpha}_0 - c'(\mu_0)\right)\mu + c'(\mu_0) \mu_0 + \hat{\alpha}_0 \geq 0 \text{,}\]
    for all \(\mu \in \left[0,1\right]\), i.e., \(\hat{p} = 1\).

    Now suppose that \(\mu_0\) is such that \(p \in \left(0,1\right)\). In this case, the support of \(F^*\), \(\mu_L\) and \(\mu_{H}\), solves
    \[\alpha_1 - \alpha_0 - c'\left(\mu_{L}\right) + c'\left(\mu_{H}\right) = 0\text{,}\]
    and
    \[c'\left(\mu_{L}\right) \mu_L - c'\left(\mu_{H}\right) \mu_{H}  + c\left(\mu_{H}\right) + \alpha_0 - c\left(\mu_{L}\right) = 0\text{.}\]
    Appealing to the implicit function theorem, we obtain
    \[\mu_{L}'(\alpha_1) = \frac{\mu_{H}}{c''\left(\mu_{L}\right) \left(\mu_{H} - \mu_{L}\right)} > 0, \quad \text{and} \quad \mu_{H}'(\alpha_1) = \frac{\mu_{L}}{c''\left(\mu_{H}\right) \left(\mu_{H} - \mu_{L}\right)} > 0\text{.}\]
    Likewise,
    \[\mu_{L}'(\alpha_0) = \frac{1-\mu_{H}}{c''\left(\mu_{L}\right) \left(\mu_{H} - \mu_{L}\right)} > 0, \quad \text{and} \quad \mu_{H}'(\alpha_0) = \frac{1-\mu_{L}}{c''\left(\mu_{H}\right) \left(\mu_{H} - \mu_{L}\right)} > 0\text{.}\]
    Evidently, \(p\) is strictly increasing in both \(\mu_{H}\) and \(\mu_{L}\), so \(\hat{p} \geq p\).
    
    \smallskip

    \noindent \(\left(\Rightarrow\right)\) Suppose for the sake of contraposition that \(\hat{a}\) dominates neither \(a\) nor \(b\). If \(\hat{a}\) is dominated (and does not dominate) by \(a\) or \(b\), the outcome is trivial. Accordingly, suppose \(\hat{a}\) is dominated by neither. There are three possibilities: either i) \(\hat{a}_0 > a_0 > b_0\) and \(\hat{a}_1 < a_1 < b_1\); or ii) \(a_0 > \hat{a}_0 > b_0\) and \(b_1 > \hat{a}_1 > a_1\); or iii) \(a_0 > b_0 > \hat{a}_0\) and \(a_1 < b_1 < \hat{a}_1\).

    Case iii is immediate. Let the DM be risk neutral: then there exists a \(\mu' \in \left(0,1\right)\), such that for any belief \(\mu < \mu'\), it is uniquely optimal for the DM to take action \(a\) when her menu is \(\left\{a,b\right\}\) and \(b\) when her menu is \(\left\{\hat{a},b\right\}\). Then, one need only pick a sufficiently convex \(c\)--that such a convex \(c\) can always be found is an implication of Lemma A.1 in \cite{flexibilitypaper}\footnote{This is not strictly true as \cite{flexibilitypaper} does not impose that the cost function is twice continuously differentiable, merely strictly convex. However, it is easy to extend that result to smooth functions: see, e.g. \cite{327966}. Alternatively, one could remove the twice-continuously differentiable specification at the expense of not being able to appeal to the implicit function theorem in the sufficiency portion of the proof.}--such that for prior \(\mu_0\), no learning is uniquely optimal in both problems, in which case \(p = 1 > 0 = \hat{p}\).
    
    Case i is also easy. Observe that \(\hat{a}\) is not \(b\)-superior to \(a\), so there exists a strictly increasing concave \(u\) for which \(0 < \hat{\mu}_u < \bar{\mu}_u < 1\). We maintain the convention \(\alpha_\theta \equiv u(a_\theta)\) and \(\hat{\alpha}_\theta \equiv u(\hat{a}_\theta)\) for all \(\theta \in \left\{0,1\right\}\) and also introduce the notation \(\beta_\theta \equiv u(b_\theta)\) for all \(\theta \in \left\{0,1\right\}\). 
    
    We tweak the notation
    \[\ell_a = \mu \alpha_1 + \left(1-\mu\right) \alpha_0, \ \ell_{\hat{a}} = \mu \hat{\alpha}_1 + \left(1-\mu\right) \hat{\alpha}_0, \text{ and } \ell_b = \mu \beta_1 + \left(1-\mu\right) \beta_0\text{,}\]
    and define \(W(\mu) \coloneqq \max\left\{\ell_a,\ell_{\hat{a}},\ell_b\right\}\). We let \(\tilde{\mu}\) denote the intersection of \(\ell_a\) and \(\ell_{\hat{a}}\), and observe that \(0 < \tilde{\mu} < \bar{\mu}\); this holds because \(\ell_{\hat{a}}\) has a steeper slope and a strictly larger \(y\)-intercept than \(\ell_a\). Then, Lemma A.1 in \cite{flexibilitypaper} implies that for any triple \(\mu_1 \in \left(0,\tilde{\mu}\right)\), \(\mu_2 \in \left(\tilde{\mu}, \bar{\mu}\right)\), and \(\mu_3 \in \left(\bar{\mu},1\right)\), there exists a UPS cost such that when the DM's value function is \(W\), any optimal learning has support on the three specified points. Accordingly, for such a cost function, when \(\mu_0 = \mu_2\), \(p = 1\) and \(\hat{p} < 1\).

    Finally, case ii: the argument from the previous paragraph allows us to assume that \(\hat{a}\) is \(b\)-superior to \(a\), or else we are done. Consequently, \(\bar{\mu}_u \leq \hat{\mu}_u\) for all permissible \(u\). Fix such a \(u\) and normalize payoffs so that \(\beta_\theta = 0\) for all \(\theta \in \left\{0,1\right\}\) (this is without loss of generality, as \(u\) has been fixed). Now take a line \(f(\mu) \coloneqq -\gamma \mu + \delta\), where \(0 < \delta < \alpha_0 - \hat{\alpha}_0\), \(\gamma > \delta\), and \[\frac{\delta}{\gamma} > \hat{\mu}_u = \frac{\hat{\alpha}_0}{\hat{\alpha}_0 - \hat{\alpha}_1}\text{.}\]
    
    Defining 
    \(T(\mu) \coloneqq \max\left\{\ell_a, \max\left\{\ell_{\hat{a}}, 0\right\} + f, 0\right\}\),
    we note that this piecewise-affine curve has three kink points. First, at some \(\mu_1 \in \left(0,1\right)\), where \(\ell_a\) and \(\ell_{\hat{a}} + f\) intersect. Second, at some \(\mu_2 \in \left(\mu_1,1\right)\), where \(0\) and \(\ell_{\hat{a}}\) intersect. Third, at some \(\mu_3 \in \left(\mu_2, 1\right)\) where \(f\) and \(0\) intersect. Again appealing to Lemma A.1 in \cite{flexibilitypaper}, taking a prior \(\mu_0 \in \left(\mu_2, \mu_3\right)\) we note the existence of a UPS cost producing \(p > 0\) and \(\hat{p} = 0\).\end{proof}

\end{document}